\newtheorem{theorem}{Theorem}
\newtheorem{corollary}{Corollary}
\begin{document}
\title{
Dimensional gain in sensing through higher-dimensional quantum spin chain
}
\author{Shivansh Singh$^{1,2}$, Leela Ganesh Chandra Lakkaraju$^2$, Srijon Ghosh$^2$, Aditi Sen(De)$^2$}
\affiliation{$^1$ Department of Physical Sciences, Indian Institute of Science Education and Research, Mohali, Punjab - 140306, India}
\affiliation{$^2$ Harish-Chandra Research Institute, A CI of HBNI, Chhatnag Road, Jhunsi, Prayagraj - 211019, India}


\begin{abstract}

Recent breakthroughs in quantum technology pave the way for extensive utilization of higher-dimensional quantum systems, which outperform their qubit counterparts in terms of capabilities and versatility. We present a framework for accurately predicting weak external magnetic fields using a higher-dimensional many-body quantum probe.  We demonstrate that dimension serves as a valuable resource for quantum sensing when a transverse spin-s Ising chain interacts locally with a magnetic field whose strength has to be determined.  We observe the distinct performance of sensors for spin chains with half-integer and integer spins. Furthermore, we highlight that the time duration appropriate for quantum-enhanced sensing increases with the increase of dimension. Additionally, we observe that, in addition to nearest-neighbor interactions, incorporating interactions between the next nearest-neighbor sites increases sensing precision, particularly for spin chains with integer spins. We also prove the dimensional-dependence of the bound on quantum Fisher information which provides the limit on the precision in estimating parameters. 

\end{abstract}

\maketitle

\newpage

\section{Introduction}


Accurate determination of physical quantities is a  crucial endeavour in numerous branches of quantum technology. Quantum metrology protocols have the potential to achieve higher precision in estimating  physical parameters when compared to their classical counterparts \cite{Braunstein1994, giovenneti2006, giovannetti_nature, Sensing_RMP_2017}. They have a wide range of applications including  optical interferometry \cite{caves1981,rafal2015}, photonics \cite{pirandola2018}, gravity \cite{Alessandro2020}, imaging \cite{albarelli2020} and biology \cite{taylor2016}.  It is well established that the Cram{\'e}r-Rao bound, quantified by quantum Fisher information (QFI) \cite{Braunstein1994,wootters_1981, Helstrom1976, caves1981}, limits the precision  of estimating an unknown parameter $\theta$, as measured by the variance, $\delta \theta$. The multipartite entangled state \cite{HHHH_entanglement} with local measurements  can ensure quantum advantage in sensors  \cite{giovannetti_nature, giovenneti2006} when sensitivity can go beyond the standard quantum limit (the shot noise limit), attainable by separable states although not all entangled states are shown to be beneficial \cite{luca_augusto_prl_2009}.  
Specifically, by exploiting the quantumness of the sensor, it is possible to attain a higher precision with quadratic scaling of QFI,  termed as Heisenberg limit (HL) \cite{giovannetti_nature, giovenneti2006, utkarsh2021, louis2022}, than with the classical ones in which QFI scales linearly with the system size.



There are two primary areas of theoretical research in quantum metrology and sensing. On one hand, it deals with the advancement of the basic concepts of metrology, which includes the achievement of  HL  through the uses of quantum resources such as entanglement \cite{giovannetti_nature}, squeezing \cite{caves1981, schnabel2017}, superposition using quantum switch \cite{zhao2020}, diagonal quantum discord \cite{sone2019}, quantum steering \cite{yadin2021,lee2023} and criticality of many-body quantum systems \cite{irenee2018, rams2018} or by performing sequential unsharp measurements without preshared entanglement \cite{victor2022}. It is interesting to note that the system possessing $k$-body interactions (with $k > 2$) \cite{roy2008, Gietka2022, sergio2007} or allowing the sensors in contact with the target field periodically \cite{utkarsh2021, louis2022} can provide bounds in precision which is beyond HL (known as super-Heisenberg limit). Another key direction is to identify suitable quantum systems that can be utilised to construct quantum sensors (QS) that will outperform standard quantum limit (SQL) and provide more precision in parameter estimation, which is also one of the primary goals of this work.  Moreover, the recent developments have also leveraged many-body quantum scars \cite{shane2023}, stark localization \cite{Xingjian2023}, and the transition from localization to delocalization in lattice systems \cite{sahoo2023} as reliable resources for estimation protocols.

The majority of proposed quantum sensors currently revolve around qubit systems, with only a limited number of studies examining the use of higher-dimensional quantum systems, notably qutrits, as probes for quantum sensing \cite{Shlyakhov2018, shane2021, shane2023}. In this paper, we provide a design of a quantum sensor based on the spin-$s$
 quantum spin chain and demonstrate its advantage over qubit-based sensors. Building such higher-dimensional sensors is motivated by the results, which establish that resources in enlarged Hilbert space can provide higher efficiencies for quantum technologies like quantum key distribution \cite{sanders02, quditQtech}, quantum computing \cite{wang2020}, quantum thermal devices like quantum battery \cite{srijon2022} and quantum refrigerator \cite{tanoy2023} as compared to resources in two-dimensional Hilbert space.  Most importantly, qudit systems can be engineered in the laboratory by using photons \cite{photonexp}, ion trap \cite{ionqudits}, nitrogen-vacancy center \cite{NVcentrequdits}, and also  superconducting systems\cite{supercondqudits}.

 


By formulating analytical expressions for both the standard quantum limit and the Heisenberg limit for $\delta \theta$ in arbitrary dimensions, we propose a qudit-based quantum sensing protocol that can beat SQL for a fixed dimension. In particular, the quantum sensor is initially prepared in the canonical equilibrium state of the transverse spin-$s$ nearest-neighbor Ising chain consisting of $N$ spins with a boundary qudit measured in an optimal basis followed by an evolution for a certain time period. Note that the measurement basis is chosen in such a way that the fidelity of the state with maximum QFI gets optimized with the final state after this preparation process. In the second step, the optimal state interacts with the target field whose strength has to be estimated before evolving with the reversed unitary operator and performing measurement on the same qudit where the initial measurement was performed. Notice that the reverse unitary operator is applied to concentrate the information about the parameter to be estimated in the single qudit while the measurement on an optimized basis provides the probability distribution of the target magnetic field required to assess its performance.
We report that the variance of the parameter not only crosses the SQL, but the distance of the variance with the Heisenberg limit of a fixed dimension decreases with the increase of the dimension of the individual site, thereby exhibiting the {\it dimensional gain}.  Remarkably, we observe that the dimensional benefit is more evident in the case of
 half-integer spins (fermions)  compared to that of integer spins (bosons). However, we demonstrate that the addition of next-nearest neighbor interaction to a quantum sensor with integer spins can help to improve the precision so that the advantage with dimension becomes apparent. Moreover, we establish that time can play important roles in quantum sensing in two ways -- firstly, the range of time in evolution in the preparation step where the variance exceeds SQL, increases with dimensions, and secondly, the exponent of evolution-time required to maximize QFI also increases as the dimension increases.

The paper is organized as follows.  In Sec. \ref{sec:set_the_stage}, we derive the modified expressions of SQL and HL for qudit systems and establish a sufficient condition for detecting genuinely multiqudit entangled  states from the perspective of QFI. Quantum sensing protocol used to demonstrate dimensional advantage is prescribed in Sec. \ref{sec:model}. In Sec. \ref{sec:result}, we establish the dimensional gain in QS. Sec. \ref{sec:conclusion} includes concluding remarks.

\section{Precision limits for qudits}
\label{sec:set_the_stage}

In literature, a qubit-based sensor is constructed which can estimate unknown parameters encoded in it. Bounds on the estimation that guarantees quantum advantage and the scaling of quantum Fisher information have also been derived in a two-dimensional scenario \cite{Sensing_RMP_2017}.
Note that QFI and its scaling can predict whether the quantum resources are useful for estimation or not. However, when the local Hilbert space dimension is arbitrary, the parameter estimation is not known. We derive the exact variation of QFI with the increase of the local dimension in the system and provide compact forms for the standard quantum and Heisenberg limits for the arbitrary $d$-dimensional quantum systems. Moreover, we establish a connection between QFI and multipartite entanglement in arbitrary dimensional systems.

We estimate the strength of the magnetic field encoded into the probe state, $\rho$, as a relative phase denoted by $\theta$ by performing positive operator-valued measurement (POVM) on the probe and by obtaining the distribution of the measured quantity. In the quantum domain, the uncertainty in the estimation of parameter, $\delta \theta$, obeys the quantum Cr\`amer-Rao bound \cite{Sensing_RMP_2017}, i.e., 
\begin{equation}
    \delta \theta \ge \frac{1}{\sqrt{\mu F_Q(\rho, G) }},
\end{equation}
where $F_Q(\rho, G)$ is the quantum Fisher information of the probe state $\rho$ with respect to the generator, $G$ of the parameter, $\theta$ and $\mu$ is the number of independent measurements performed. 
To minimize the error thereby maximizing the precision in the estimation, the probe state must be kept in the state with the highest QFI.
QFI is defined by using  $F_Q(\rho, G) = \text{tr}(\rho L^{2}_{\theta})$, where $L_{\theta}$ is the symmetric logarithmic derivative defined as$
    \frac{\partial \rho}{\partial \theta} = L_{\theta}\rho + \rho L_{\theta}.$
For a unitary evolution, $U = e^{-i G\theta}$ where $G$ is the generator of the parameter $\theta$, the symmetric logarithmic derivative (SLD) takes the form as \cite{GToth2018} $
    L_{\theta} = 2i\sum_{k,l}\frac{p_k - p_l}{p_k + p_l}|k\rangle\langle l| \langle k | G | l\rangle ,
$
which leads to the QFI as 
\begin{equation}
    F_Q(\rho, G) = 2\sum_{k,l}\frac{(p_k - p_l)^2}{p_k + p_l}|\langle k | G | l \rangle|^{2},
\end{equation}
where $p_k$ and $|k\rangle$ are the eigenvalues and the eigenvectors of the probe state $\rho$.
For a pure state, $\rho=|\psi_{\theta}\rangle\langle \psi_{\theta}|$, $L_{\theta} = 2\partial_{\theta} \rho$, and  hence QFI simplifies as $F_Q(|\psi_\theta\rangle\langle \psi_\theta|, G) = 4[\langle \partial_{\theta} \psi_\theta | \partial_{\theta} \psi_\theta \rangle + (\langle \psi_\theta| \partial_{\theta} \psi_\theta\rangle)^{2}]$\cite{paris2009}. 

\subsection*{Standard quantum limit (SQL) and  Heisenberg limit (HL) for qudits }

Let us first discuss the maximum achievable accuracy (standard quantum limit) in parameter estimation protocols with the exclusive use of classical resources or separable probe states \cite{giovenneti2006}. 
The single qudit state that achieves the maximum quantum Fisher information is given by \cite{geza_prl_bound_2018}
\begin{equation}
    |\psi_{d}\rangle = \frac{1}{\sqrt{2}} (|0\rangle + |d-1\rangle),
\end{equation}
where $d = 2s+1$ denotes the dimension of the Hilbert space of a single system of spin-$s$. 
The standard quantum limit can now be found via the interaction of the optimal separable state over $N$-qudits $|\Psi^{sep}\rangle = |\psi_{d}\rangle^{\otimes N}$ with the local operator. It can be experimentally realized as an external magnetic field that interacts locally with the state to encode the parameter $\theta$ which is the strength of the field. The initial state is evolved locally for a certain time $t$ with $H_{\theta} = \theta \sum_{i=1}^{N} S^{z}_{i}$, where $S_i^\alpha$ is the $SU(2)$ representation in $d$-dimension. The matrix representation of the spin operators in arbitrary dimension can be written in the computational basis given by $|m\rangle \in \{|0\rangle,|1\rangle,\ldots,|d-1\rangle\}$ as
\begin{align}
\left\langle m^{\prime}\left|S^x\right| m\right\rangle =&\left(\delta_{m^{\prime}, m+1}+\delta_{m^{\prime}+1, m}\right) \nonumber  \\
&\frac{1}{2} \sqrt{s(s+1)-m^{\prime} m}, \nonumber \\
\left\langle m^{\prime}\left|S^y\right| m\right\rangle =&\left(\delta_{m^{\prime}, m+1}-\delta_{m^{\prime}+1, m}\right) \nonumber \\ &\frac{1}{2 \mathrm{i}} \sqrt{s(s+1)-m^{\prime} m}, \nonumber \\
\left\langle m^{\prime}\left|S^z\right| m\right\rangle  =&~\delta_{m^{\prime}, m} m,  \nonumber \\
\text{ and }\left\langle m^{\prime} \left| S^2 \right| m\right\rangle  =& ~\delta_{m^{\prime}, m} s(s+1).
\end{align}
The encoded time-evolved state takes the form
$ |\Psi_{ini}^{sep} (t)\rangle =  e^{-iH_\theta t} |\Psi^{sep}\rangle =  [\frac{1}{\sqrt{2}} (e^{-i\theta s t}|0\rangle + e^{i\theta s t}|d-1\rangle)]^{\otimes N}. $
The information about the parameter $\theta$ is accumulated by performing POVM on individual qudits with the POVM elements being $M^{+} = P \big[|0\rangle + i |d-1\rangle \big/\sqrt{2}]$ and $M^{-} = \mathbb{I} - M^{+}$, where $P[*]=|*\rangle \langle *|$ denotes the projector, such that the probability of projecting the system to the space of $M^+$ for each qudit becomes $ p^{+} = \frac{1 + \sin (2 \theta s t)}{2}$. 
After repeating the measurement on $N$ independent systems, the uncertainty in measuring $\theta$ can be evaluated by
 \begin{equation}
\delta \theta = \frac{\sqrt{p^{+}(1-p^{+})}}{|\frac{\partial p^+}{\partial \theta}|\sqrt{N \mu}}.
\label{del_theta}
\end{equation}
The factor of $\sqrt{N}$ in the denominator is the consequence of the central limit theorem \cite{giovannetti_nature} when the experiment is performed on $N$ independent systems. A few lines of algebra lead to the minimum uncertainty in the estimation of the parameter, $\delta \theta_{SQL} = \frac{1}{2st\sqrt{N\mu}} $, where the number of measurements $\mu$ can be defined as the ratio between total time consumed $t_{all}$ and the time of each round $t$. It leads to the standard quantum limit for the qudit system as it is indeed clear from this expression that increasing the local dimension of the subsystem reduces the value of $\delta \theta$, in comparison with the qubit system by a factor of $2s$ as $\delta \theta_{SQL} \sqrt{t_{all}}  =  \frac{1}{2s\sqrt{N t} }$.

\subsection*{Bounds for quantum Fisher information and criteria for multipartite entanglement in qudit systems}

It was recently shown that QFI can be used as an efficient witness for multipartite entanglement \cite{geza2012, luca_augusto_prl_2009} of an $N$-party state. 
Specifically, an $N$-qubit quantum state is genuinely multipartite entangled (GME) if the QFI corresponding to the local operator such as $J^z= \sum_{i=1}^N S_i^z$ has a lower bound of $(N-1)^2$.
We derive generalized versions of these bounds in the case of multiqudit systems.

\begin{theorem}

QFI with respect to the local operator $J^\alpha= \sum_{i=1}^N S_i^{\alpha}$ $(\alpha \in \{x,y,z\})$ of a quantum state $\rho$, composed of $N$ qudits with local dimension $d = 2s+1$ ($s$ being the quantum number) is upper bounded by
\begin{equation*}
    F_Q(\rho, J^\alpha) \le 4s^2N^2, \end{equation*} 
\begin{equation*}
   \text{and } \sum_\alpha F_Q(\rho, J^\alpha) \le 4sN(sN + 1).
\end{equation*}
\end{theorem}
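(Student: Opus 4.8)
The plan is to reduce both inequalities to elementary estimates on the second moment $\text{tr}(\rho\,G^{2})$ of the relevant collective generator. The key general fact is that, for any state $\rho$ and Hermitian $G$, the QFI obeys $F_Q(\rho,G)\le 4\,\text{tr}(\rho\,G^{2})$; this follows directly from the spectral formula for $F_Q$ recorded above by using $(p_k-p_l)^{2}\le(p_k+p_l)^{2}$ together with the completeness of the eigenbasis $\{|l\rangle\}$ of $\rho$:
\begin{equation*}
F_Q(\rho,G)\le 2\sum_{k,l}(p_k+p_l)\,|\langle k|G|l\rangle|^{2}=4\sum_{k}p_k\langle k|G^{2}|k\rangle=4\,\text{tr}(\rho\,G^{2}).
\end{equation*}
Since $F_Q(\rho,G)=F_Q(\rho,G+c\,\mathbb{I})$ for any constant $c$, we may take each $S_i^{\alpha}$ in its standard traceless $SU(2)$ form, for which the three components satisfy $\sum_{\alpha}(S_i^{\alpha})^{2}=s(s+1)\,\mathbb{I}$, and each $S_i^\alpha$ has spectrum $\{-s,-s+1,\ldots,s\}$, hence operator norm $s$. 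It then suffices to bound $\text{tr}(\rho\,(J^{\alpha})^{2})$ for each $\alpha$ and $\sum_{\alpha}\text{tr}(\rho\,(J^{\alpha})^{2})$.

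For the first bound, note that $J^{\alpha}=\sum_{i}S_i^{\alpha}$ is a sum of $N$ operators acting on distinct tensor factors, so the triangle inequality for the operator norm gives $\|J^{\alpha}\|\le\sum_{i}\|S_i^{\alpha}\|=Ns$. Hence $\text{tr}(\rho\,(J^{\alpha})^{2})\le\|J^{\alpha}\|^{2}\le s^{2}N^{2}$ for every $\rho$, i.e.\ $F_Q(\rho,J^{\alpha})\le 4s^{2}N^{2}$. (Equivalently, the spectrum of $J^{\alpha}$ lies in $[-Ns,Ns]$, and Popoviciu's inequality bounds its variance by $s^{2}N^{2}$.)

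For the second bound, summing over $\alpha$ assembles the total-spin operator,
\begin{equation*}
\sum_{\alpha}\text{tr}(\rho\,(J^{\alpha})^{2})=\text{tr}(\rho\,\vec{J}^{2}),\qquad \vec{J}^{2}:=\sum_{\alpha}(J^{\alpha})^{2}=\sum_{i}\vec{S}_i^{2}+\sum_{i\ne j}\vec{S}_i\cdot\vec{S}_j=Ns(s+1)\mathbb{I}+\sum_{i\ne j}\vec{S}_i\cdot\vec{S}_j.
\end{equation*}
I would bound each cross term via $\vec{S}_i\cdot\vec{S}_j=\frac{1}{2}[(\vec{S}_i+\vec{S}_j)^{2}-2s(s+1)\mathbb{I}]$: because two spin-$s$ systems couple to total spin at most $2s$, the largest eigenvalue of $(\vec{S}_i+\vec{S}_j)^{2}$ is $2s(2s+1)$, so that of $\vec{S}_i\cdot\vec{S}_j$ is $s^{2}$, whence $\text{tr}(\rho\,\vec{S}_i\cdot\vec{S}_j)\le s^{2}$ for all $\rho$. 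Summing the $N$ diagonal and the $N(N-1)$ off-diagonal terms gives $\text{tr}(\rho\,\vec{J}^{2})\le Ns(s+1)+N(N-1)s^{2}=sN(sN+1)$, and multiplying by $4$ yields $\sum_{\alpha}F_Q(\rho,J^{\alpha})\le 4sN(sN+1)$. (Equivalently, $\vec{J}^{2}$ is the Casimir of the collective $SU(2)$ action on $(\mathbb{C}^{d})^{\otimes N}$, whose largest eigenvalue is $Ns(Ns+1)$ since $N$ spin-$s$ constituents form a total spin at most $Ns$.)

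The only non-elementary ingredient, and hence the main obstacle, is this last representation-theoretic point: the maximal total angular momentum of $N$ spin-$s$ subsystems is $Ns$ (equivalently, $\|\vec{S}_i\cdot\vec{S}_j\|=s^{2}$ for the pairwise coupling). The generic inequality $F_Q\le 4\,\text{tr}(\rho\,G^{2})$ and the norm estimates are routine. As a consistency check on tightness, the qudit GHZ state $\frac{1}{\sqrt{2}}(|0\rangle^{\otimes N}+|d-1\rangle^{\otimes N})$ saturates $F_Q=4s^{2}N^{2}$, while symmetric Dicke-type states living in the maximal-spin sector saturate the bound on $\sum_{\alpha}F_Q$, both reducing to the known qubit results at $s=1/2$.
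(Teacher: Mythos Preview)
Your proof is correct and follows essentially the same route as the paper's: bound $F_Q$ by $4\langle G^{2}\rangle$, then control the second moment of the collective spin using $\|S_i^{\alpha}\|=s$ for the first inequality and the Casimir identity together with $\vec S_i\cdot\vec S_j\le s^{2}$ for the second. Your justifications are, if anything, slightly tighter than the paper's---you derive the cross-term bound from angular-momentum coupling (and note the equivalent global statement $\vec J^{\,2}\le Ns(Ns+1)$), whereas the paper argues it more informally.
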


\begin{proof}
The QFI corresponding to an operator $J^\alpha$  is related to its variance  as
$F_Q(\rho, J^\alpha) \le 4(\Delta J^\alpha)^2$ \cite{Braunstein1994},  where the equality holds only for the pure states. 
Since
\begin{equation}
    (\Delta J^\alpha)^2 \le \langle {J^\alpha}^2\rangle = \sum_{i\ne j} \langle S_i^{\alpha} S_j^{\alpha} \rangle + \sum_{i=j} \langle {S_i^\alpha}^2\rangle,
    \label{eq: spin_sum}
\end{equation}
and by using inequalities $ \langle {S^\alpha_i}^2\rangle \le s^2$
and $ \langle S^\alpha_i S^\alpha_j \rangle \le s^2, $ we have
\begin{equation}
    (\Delta J^\alpha
    )^2 \le N(N-1)s^2 + Ns^2. 
\end{equation}
Hence, the upper-bound on QFI in $d$-dimension becomes
\begin{equation}
    F_Q(\rho, J^\alpha) \le 4s^2N^2.
\end{equation}
The state that saturates the bound is a genuinely multipartite entangled state, given by $|\Psi^{ent}\rangle = \frac{1}{\sqrt{2}} (|0\rangle^{\otimes N} + |d-1\rangle^{\otimes N})$. 
Summing over all the directions of both sides of Eq. (\ref{eq: spin_sum}), i.e., $\sum_{\alpha} F_Q(\rho, J^\alpha) \le 4\sum_{\alpha} (\Delta J^\alpha)^2$, we obtain 
\begin{align}
    \sum_{\alpha} (\Delta J^\alpha)^2 \le &\sum_{i=j} \langle{{S^x}^2}\rangle + \langle{{S^y}^2}\rangle + \langle{{S^z}^2}\rangle + \nonumber \\
    &  \sum_{i\ne j}\langle{S^x_{i} S^x_{j}}\rangle + \langle{S^y_{i} S^y_{j}}\rangle + \langle{S^z_{i} S^z_{j}}\rangle.
\end{align}
Since $\langle{{S^x}^2}\rangle + \langle{{S^y}^2}\rangle + \langle{{S^z}^2}\rangle =  \langle S^2 \rangle = s(s+1)$ and the inequality of the form
$\langle{S^x_iS^x_j}\rangle + \langle{S^y_iS^y_j}\rangle + \langle{S^z_iS^z_j}\rangle \le s^2$ $\forall~ \{i,j\}$ exists, (the latter can be inferred by considering one of the eigenvectors of either of spin operators and the corresponding expectation values of other spin operators being vanishing), 
the bound on the QFI becomes
\begin{equation}
    \sum_\alpha F_Q(\rho, J^\alpha) \le 4Ns(Ns+1).
\end{equation}
\end{proof}
Having obtained the bounds on the QFI, we can use them to obtain bounds for $k-$producible states i.e., where the number of entangled parties is upper bounded by $k$. Consider a $k-$producible pure state 
$|\psi\rangle = \bigotimes_{i = 1}^l |\psi^{r_i} \rangle$ where $|\psi^{r_i} \rangle$ represents the $r_i$-party entangled state and $\max\{ r_1, r_2, \ldots r_l\} = k$ and $\sum_{i = 1}^l r_i = N$.
Now using the additivity property of QFI for separable states and  and the inequalities derived from the variance of $J^\alpha$, we obtain
\begin{equation}\label{qfi_ksep_bound}
    F_Q(\rho, J^\alpha) \le 4s^2\sum_i r_i^2, 
\end{equation}
\begin{equation}
    \text{and }\sum_\alpha F_Q(\rho, J^\alpha) \le 4\sum_i sr_i(sr_i + 1).   
\end{equation} 

\textbf{Heisenberg limit:} Instead of a seperable state, if we now use entangled state to encode $\theta$ using Ramsey interferometry \cite{huelga, giovenneti2006}, we can overcome SQL. Following the Cramer-Rao bound, the uncertainty for the highest QFI state, under the best possible measurement given by the SLD, is $\delta \theta =  \frac{1}{\sqrt{\mu F_Q(|\Psi^{ent}\rangle, ~J^z)}} = \frac{1}{2sNt \sqrt{\mu} }$ as the value of $\theta$ is induced on to the state using the operator $e^{-i\theta J^z t}$. As we did in the case of SQL, the number of independent measurements can be inferred as $\mu = \frac{t_{all}}{t}$, such that the Heisenberg limit for qudit system becomes $\delta \theta_{HL} \sqrt{t_{all}} = \frac{1}{2sN\sqrt{t}}$ \cite{matsuzaki2011, Matsuzaki_2021}.  


\begin{corollary}
    There exist $k$ $-$ producible entangled states with individual subsystems having spin-$s_2$ that have QFI greater than the maximum possible QFI achieved by $N$ - party spin-$s_1$ state, where $s_2 > s_1$, such that $k \ge \left(1+\frac{N s_1}{s_2}\right).$
\end{corollary}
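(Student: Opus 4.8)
The plan is to exhibit an explicit $k$-producible state built out of spin-$s_2$ systems whose quantum Fisher information exceeds the largest QFI that any $N$-party spin-$s_1$ state can attain, and then read off the condition on $k$ from the resulting inequality. By the Theorem, that largest value is $F_Q^{\max}=4s_1^2N^2$, reached in the equality case by the $N$-qudit cat state $\tfrac{1}{\sqrt{2}}\big(|0\rangle^{\otimes N}+|d_1-1\rangle^{\otimes N}\big)$ with $d_1=2s_1+1$. Hence it suffices to display a $k$-producible spin-$s_2$ state with QFI strictly larger than $4s_1^2N^2$.

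For the witness I would take the genuinely $k$-partite entangled cat state $|\phi\rangle=\tfrac{1}{\sqrt{2}}\big(|0\rangle^{\otimes k}+|d_2-1\rangle^{\otimes k}\big)$ on $k$ copies of a spin-$s_2$ system, $d_2=2s_2+1$; being genuinely $k$-partite entangled, it is in particular $k$-producible. Applying the equality case of the Theorem with $N\to k$ and $s\to s_2$ to the local operator $J^z=\sum_{i=1}^k S_i^z$ gives $F_Q(|\phi\rangle\!\langle\phi|,J^z)=4s_2^2k^2$. Imposing $4s_2^2k^2>4s_1^2N^2$ is, since all quantities are positive, equivalent to $s_2k>s_1N$, i.e.\ $k>Ns_1/s_2$; any integer $k$ with $k\ge 1+Ns_1/s_2$ satisfies this strict inequality (a single expression covering both $Ns_1/s_2\in\mathbb{Z}$ and $Ns_1/s_2\notin\mathbb{Z}$), which is exactly the claimed sufficient condition.

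There is essentially no analytic obstacle here; the content is conceptual. The point that requires care is the choice of witness: among all $k$-producible spin-$s_2$ states it is the genuinely $k$-partite entangled cat state that renders the Theorem's bound tight, and recognising that such a state is by definition $k$-producible is what chains the two bounds together. One could alternatively spread the entanglement over several blocks of size at most $k$ on $N$ parties and invoke Eq.~(\ref{qfi_ksep_bound}); this yields a different, generally weaker threshold, so the single-block construction is the one matching the bound quoted in the corollary.
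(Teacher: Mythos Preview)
Your argument is correct and follows essentially the same route as the paper: exhibit a spin-$s_2$ cat state as an explicit $k$-producible witness, invoke the Theorem to compute its QFI, and compare with the maximal value $4s_1^2N^2$. The only cosmetic difference is that the paper bounds the achievable QFI of $k$-producible states from below by $4s_2^2(k-1)^2$ and solves $4s_1^2N^2\le 4s_2^2(k-1)^2$ directly, whereas you use the tighter value $4s_2^2k^2$ from the $k$-party cat state and then observe that the stated condition $k\ge 1+Ns_1/s_2$ is sufficient for the resulting strict inequality.
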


\begin{proof}
    As seen from the proof of Theorem $1$, the $N$-party state that maximizes QFI for spin-$s_1$ is $\frac{1}{\sqrt{2}}(|0\rangle^{\otimes N } + |d_1-1\rangle^{\otimes N })$, where $d_1 = 2s_1 + 1$, for which the QFI is $4s_1^2N^2$. Now, as the spin increases to $s_2$, the $k$-producible entangled states in the corresponding Hilbert space can have QFI which is lower bounded by $4s_2^2(k-1)^2$, such that the maximum QFI of $s_1$-system can be realized by a fewer party entangled state using $s_2$-system. Mathematically, 

    \begin{equation*}
        4s_1^2N^2 \le 4s_2^2(k-1)^2  
       \implies k \ge \left(1+\frac{N s_1}{s_2}\right),
    \end{equation*}
which implies that a $k$-producible state in spin-$s_2$ is enough to generate the same QFI as the maximum QFI of the spin-$s_1$.
    
\end{proof}

Let us illustrate \cite{luca_augusto_prl_2009} that not all $N$-qudit entangled states are useful for quantum metrology. To manifest this, we generate Haar uniformly \cite{Bengtsson_Zyczkowski_2006} pure four-qudit states $\sum_{ijkl = 0}^{d-1}a_{ijkl}|ijkl\rangle$ with $a_{ijkl} = a^{\prime}_{ijkl} + i a^{\prime \prime}_{ijkl}$ where $ a^{\prime}_{ijkl} (a^{\prime \prime}_{ijkl})$ are chosen randomly from Gaussian distribution with zero mean and unit standard deviation, 
calculate QFI as $F_Q(\rho, J^z)$,  and study the corresponding frequency ($\nu$) distribution. In the case of spin-$1/2$, the mean of the distribution is very close to the QFI of the optimal separable state, $F_Q(|\Psi^{sep}\rangle, J^z)$ while as the dimension increases, the mean of the distribution is significantly smaller than $F_Q(|\Psi^{sep}\rangle, J^z)$ (as shown in Fig. \ref{fig:haar_qfi} by dotted lines). This is possible due to the fact that the number of states other than $|0\rangle$ and $|d-1\rangle$ in which the superposition is made up increases, as $s$ increases, which results in a decrement of QFI. This shows that the states that are useful for quantum-enhanced metrology become increasingly sparse as the dimension increases. 

\begin{figure}
    \centering
    \includegraphics[width=\linewidth]{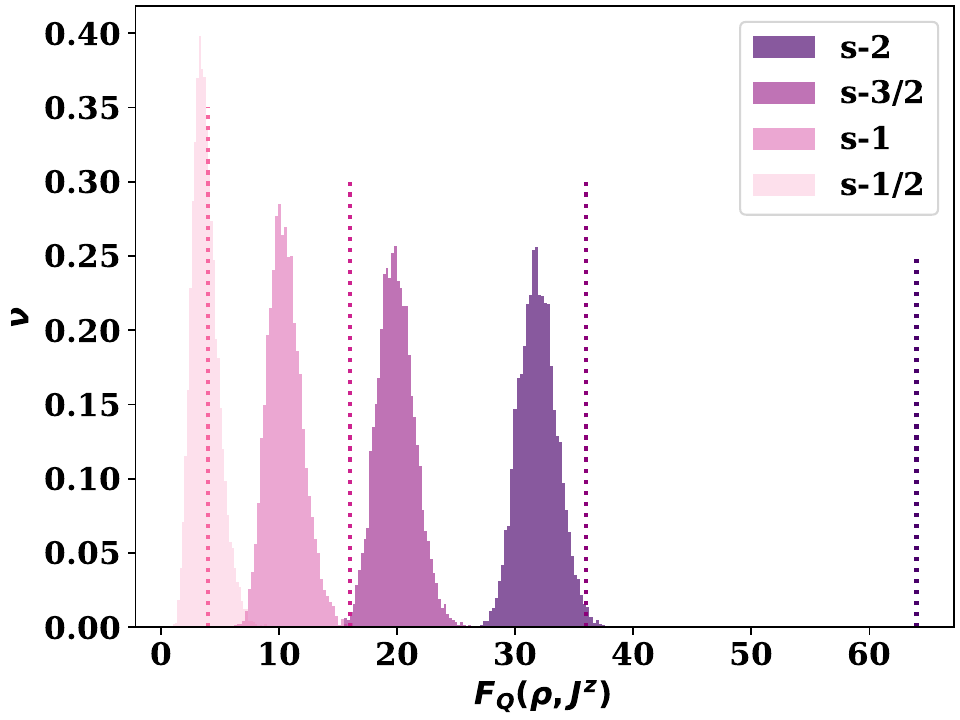}
    \caption{(Color Online.) The normalized frequency $(\nu)$ distribution (ordinate) of  QFI, $F_Q(\rho, J^z)$ (abscissa) of Haar uniformly generated four-qudit states with different values of local dimension, \(d= 2s +1\). The number of random states generated is $10^4$. The dotted lines represent the QFI of the optimal separable state $F_Q(|\Psi^{sep}\rangle, J^z)$. The distribution and SQL become darker as the value of $s$ increases. Both the axes are dimensionless.}
    \label{fig:haar_qfi}
\end{figure}

This motivates us to design a measurement-based sensing protocol that does not require maximum genuine multipartite entanglement. In addition to that, we show that as the dimension of the system increases, the amount of time it takes to reach the maximum QFI or minimum uncertainty also decreases. This two-fold benefit is displayed in the succeeding section which can be termed as dimensional gain.

\section{Minimum uncertainty transverse field spin-s Ising chain}
\label{sec:model}

First, we present the sensing protocol by using a nearest-neighbor transverse field Ising chain, consisting of $N$ number of qudits with open boundary conditions, given by
\begin{equation}
H_{sen} = H_{zz} + H_{f}\\
=J\sum_{i=1}^{N-1} S^z_{i} S^z_{i+1} + h \sum_{i=1}^N S^x_{i},
\end{equation}
where $J$ denotes the coupling constants between the sites and $h$ is the strength of the local magnetic field acted on the individual site. 
The initial state is prepared in the canonical equilibrium state of the Hamiltonian, i.e., $\rho_{ini} = \exp(- \beta H_{sen}) / Z$ with $\beta = \frac{1}{k_B T}$ being the inverse temperature $T$, and $Z = \tr[\exp(- \beta H_{sen})]$ is the partition function.  
By using this as the probe, our main aim is to accurately estimate the strength of the external magnetic field $\omega$, described by the target Hamiltonian
\begin{equation}
    H_{tar} = \omega \sum_{i=1}^N S^z_{i}.
\end{equation}
\noindent
The sensing protocol considered here consists of three basic steps (c.f. \cite{Matsuzaki_2021}):

\textbf{\textit{Step 1} - State preparation and generation of resource.} In the preparation step, we perform a POVM in the optimized basis on a boundary qudit, say the first site of the spin chain, so that, after evolving the post-measured state for a certain time, say, $t^{*}$, the resultant state shares a significantly high fidelity with the maximally QFI state on that specific dimension. Unlike the qubit case, finding the optimal measurement basis is non-trivial in the qudit case. However, if we restrict ourselves to the two-outcome POVM in arbitrary dimensions, we can assume that the measurement is performed in the basis, $M_1 = P[\sum_{i = 0}^{d-1} a_i |i\rangle]$ and $M_2 = \mathbb{I}-M_1$. 
We find the optimal measurement by ensuring that the state at the end of this step as $\rho(t^{*}) = \exp(-i H_{sen} t^{*}) \rho_{M_{1}} \exp(i H_{sen} t^{*})$ with $\rho_{M_1} = \frac{M_1 \rho_{ini} M_1}{\text{tr}(M_1 \rho_{ini} M_1)}$ becomes as close as possible to the maximum QFI state. Due to the nature of the dynamics of the protocol, the last qudit or the qudit on the opposite end of the measured qudit does not entangle with the rest of the chain. Thus, we calculate the fidelity between the maximal QFI state and $\tilde{\rho}(t^*)=\tr_{N}\rho(t^*)$ as $\mathcal{F} = \langle \Psi^{ent}|\tilde{\rho}(t^*)|\Psi^{ent}\rangle$. In this regard, we numerically maximize $\mathcal{F}$ with dimension, $3 \le d \le 6$.  For example, let us consider the spin-$3/2$ system in which an arbitrary projector can be applied on the first site written as $P[a_0 |0\rangle +  a_1 |1\rangle +  a_2 |2\rangle + a_3 |3\rangle]$ where $a_0^2+ a_1^2+ a_2^2+ a_3^2 = 1$. Extensive numerical search confirms that it is sufficient to keep the coefficients to be real and the optimization lies on the surface of $a_0 = a_3$. We then maximize the fidelity $\mathcal{F}$ over the two-dimensional space of $a_1$ and $a_2$ and find that the fidelity $\mathcal{F}$ is maximum only when they vanish (see Fig. \ref{fig:opt_measure}). This indicates that the optimal measurement in this step can be taken as $M_1 = P[(|0 \rangle + |d-1\rangle)/\sqrt{2}] $ and its orthogonal subspace. 

\begin{figure}
    \centering
    \includegraphics[width=\linewidth]{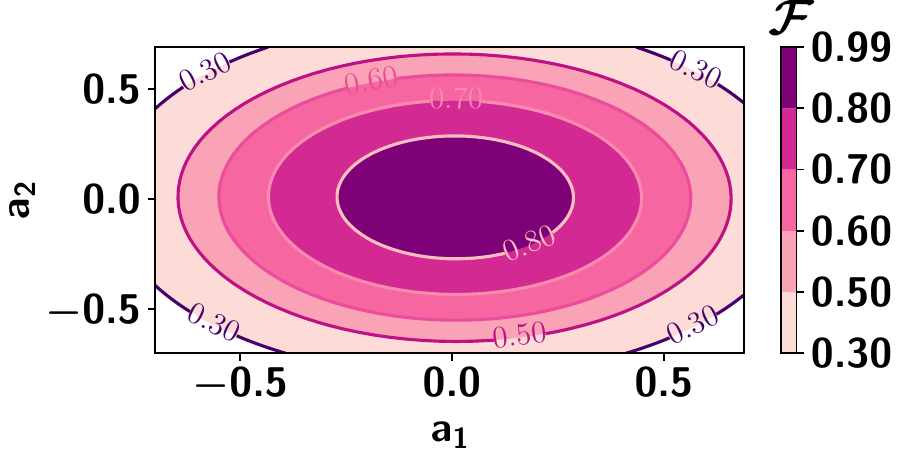}
    \caption{ (Color Online.) The map plot of fidelity $(\mathcal{F})$ of the four-qudit state after Step 1 (having spin-\(3/2\)) with the maximal QFI state when the measurement outcome is $M_1 = P[a_0 |0\rangle +  a_1 |1\rangle +  a_2 |2\rangle + a_3 |3\rangle]$ with $a_0 = a_3$.  The plot is against $a_1$ (horizontal axis) and $a_2$ (vertical axis). The relevant range of both the parameters is in between $[-0.7, 0.7]$. The evolution-time $t^*_{3/2} = 50.76$. It shows that the fidelity becomes maximum when \(a_1= a_2 = 0\). Both the axes are dimensionless. }
    \label{fig:opt_measure}
\end{figure}

 

\textbf{\textit{Step 2} - Encoding the target parameter.} With $\rho(t^*)$, the sensor Hamiltonian  $H_{sen}$ is turned off,  
 allowing the system to interact with the target field for a time interval of $t_{int}$. As a result, the system evolves according to the unitary, $U_{tar}(t_{int}) = e^{-i (H_{tar}) t_{int}}$ which leads to a resultant state $\rho(t^{*}+t_{int}) = U_{tar}(t_{int})\rho(t^*)U_{tar}^\dagger(t_{int})$ which contains the information about $\omega$. 

\textbf{\textit{Step 3} - Estimating $\omega$ through optimized measurement.} To estimate the relevant parameter, $\omega$, the system evolves according to the time-reversed unitary generated by the sensor Hamiltonian $H_{sen}$ up to a time interval $t^*$ as $U_{sen}^\dagger(t^*) = \exp(i H_{sen} t^{*})$. The boundary qudit, on which the measurement is initially performed, becomes disentangled from the rest of the part of the spin-chain, and all the information gets concentrated on that subsystem. We then measure the same qudit in the aforementioned basis $M^+$ and its orthogonal space in obtaining SQL. Incidentally, the measurement is the eigenbasis of the SLD such that the Cramer-Rao bound is saturated. After the post-selection of the measurement output corresponding to $M^+$, we obtain the probability distribution that contains information about the target field. Denoting the total time required to complete the entire protocol as $t_{all} = 2t^* + t_{tar}+$ time for measurements in steps 1 and 3 and initialization in step 1 of the state in steps 1 and 3 and $t_{sense} = 2t^* + t_{tar}$, the uncertainty in estimating the target field $\omega$ reads as

\begin{equation}
\delta \omega \sqrt{t_{all}} = \frac{\sqrt{p^{+}(1-p^{+})}}{|\frac{\partial p^+}{\partial \theta}|} \sqrt{t_{sense}}.
\label{del_omega_tall}
\end{equation}
The value of $\frac{t_{all}}{t_{sense}}$ denotes the independent repetitions of the protocol. For a given dimension or alternatively, spin-$s$ systems, the minimum uncertainty, denoted as $\delta \omega_{\min}$ can be analyzed to determine the performance of the sensing scheme.

\section{Local dimension as Resource for quantum sensing: Integer vs half-integer spin}
\label{sec:result}

We analyze the performance of the sensor constructed via a spin-$s$ transverse  Ising chain. To gauge the accuracy of the proposed sensor, we evaluate the value of $\delta \omega$, i.e., the variance of the target magnetic field, as given in Eq. (\ref{del_omega_tall}). We establish first the dimensional advantage, i.e., the benefits for building spin-$s$ QS $(s > \frac{1}{2})$.
In addition, we also illustrate the advantage in time for the higher dimensional sensor. Interestingly, we report that the dimensional benefits are more pronounced for half-integer QS compared to the integer-spin one. 

\textbf{Dimensional gain.} To establish that the dimension in the individual site is beneficial, one has to compare $\delta \omega \sqrt{t_{all}}$ obtained from the scheme described in Sec. \ref{sec:model} with both $\delta \omega_{SQL}\sqrt{t_{all}}$ and $\delta \omega_{HL} \sqrt{t_{all}}$. In particular, when $\delta \omega \sqrt{t_{all}} < \delta \omega_{SQL} \sqrt{t_{all}}$ (marked with dashed line in Fig. \ref{fig:half_int_time}), it guarantees the quantum advantage. We observe that for a given spin quantum number $s$, there exists a range of evolution-time in Steps $1$ and $3$, $t^*_s$, for which   $\delta \omega \sqrt{t_{all}}$ always goes below $\delta \omega_{SQL}\sqrt{t_{all}}$ and $\delta \omega \sqrt{t_{all}}$ itself decreases with varying dimension (as shown in Fig. \ref{fig:half_int_time}). To establish dimensional gain, we have to compare $\delta \omega \sqrt{t_{all}}$, especially $\delta \omega_{\min} \sqrt{t_{all}} = \min_{t^*_s}  \delta \omega \sqrt{t_{all}} $ with $\delta \omega_{HL}\sqrt{t_{all}}$, i.e., we define a quantity called dimensional gain as $\Delta^{adv} = (\delta \omega_{\min} - \delta \omega_{HL})\sqrt{t_{all}}$. If one can demonstrate that $\Delta^{adv}$ decreases with $s(d)$, which is indeed the case for our scheme as depicted in Fig. \ref{Fig:half_full_fig}, we confirm the dimensional gain.  Note that in Step 1 of Sec. \ref{sec:model}, we use Eq. (\ref{qfi_ksep_bound}) to ensure that the system is at most $(N-1)$-party entangled, due to which we set the Heisenberg limit with respect to the $N-1$ number of qudits as opposed to $N$. Moreover,  we find that the range of $t_s^*$ where $ \delta \omega \sqrt{t_{all}} <  \delta \omega_{SQL} \sqrt{t_{all}}$ increases with the increase of dimension. Specifically, for a fixed $s$, we can define $\Delta t_s^* = (t_s^*)_{\max} - (t_s^*)_{\min} $, where $(t_s^*)_{\max(\min)}$ denotes the maximum (minimum) value of $t_s^*$ up to which the precision is more than that of the SQL. We observe that  $\Delta t_{s_2}^* > \Delta t_{s_1}^*$ where $s_2 > s_1$, thereby again exhibiting the advantage of dimension. Notice that $\Delta t_{s}^*$ can be presented as robustness for the resource to obtain the quantum advantage in sensing which is different than the one reported before \cite{victor2022}. 

Let us now compare the benefits for the systems with integer and half-integer spins. Interestingly, we notice that the case of $s = \frac{n}{2}$ $\forall n = 1,3,\ldots$ $\Delta^{adv}$ is more substantial compared to the case with integer spins (see Fig. \ref{Fig:half_full_fig}). Although we cannot find the clear reasoning behind this, especially by examining the fidelity after Step 1, corresponding to $M^+$, we notice that in the case of half-integer spins, the fidelity is much higher than that of the model with integer spins.  We will exhibit that such differences can be eliminated by adding more interaction between neighboring sites. 

\setlength{\tabcolsep}{12pt} 
\begin{table}
  \centering
  \begin{tabular}{cccc}
    \hline
    $s$ & $\delta\omega_{SQL}\sqrt{t_{all}}$ & $\delta\omega_{HL}\sqrt{t_{all}}$ & $\delta\omega_{min}\sqrt{t_{all}}$ \\
    \hline
    1/2 & 0.0126157 & 0.00841044 & 0.00917038 \\
    1 & 0.00630783 & 0.00420522 & 0.00515845 \\
    3/2 & 0.00420522 & 0.00280348 & 0.00306021 \\
    2 & 0.00315392 & 0.00210261 & 0.00302282 \\
    5/2 & 0.00252313 & 0.00168209 & 0.00188483 \\
    3 & 0.00210261 & 0.00140174 & 0.00196205 \\
    \hline
  \end{tabular}
  \caption{ For each dimension or spin quantum number, $s$, the corresponding SQL, HL, and minimum uncertainty for the sensor are tabulated. All of the uncertainty values are scaled by $\sqrt{t_{all}}$. We consider $ N =4, \beta = 10$, $\omega = 10^{-6}$, $t_{sense} \approx~ t_{int} = 500\pi$, $h/J = 0.10$.}
  \label{tab:min_uncert_table}
\end{table} 


\subsection*{Advantage in time for spin-s quantum sensor }

The total time required to complete the protocol has recently been argued as a resource for quantum sensing  \cite{victor2022}. We propose yet another kind of advantage in time for the spin-s sensing protocol considered here. As shown in the preceding section, the QFI of the state after Step 1 increases as the local dimension increases. We observe that the maximum QFI can be obtained with increasing exponent of time $t^*$ as seen in Fig. \ref{fig:time_scaling}. In particular, we find $\gamma$ in $(t^*)^\gamma$ when QFI approaches maximum increases with the increase of dimension. Even here, there is a distinction between half-integer and integer spins. Specifically, we note that the value of the maximal QFI for integer spins is less than the half-integer ones with respect to the value of minimum uncertainty (see Table. \ref{tab:min_uncert_table}). Although the values of $\gamma$ are comparable for all values of $s$ (as shown in Table \ref{tab:qfi_time}).

\begin{figure}
    \centering
    \includegraphics[width = \linewidth]{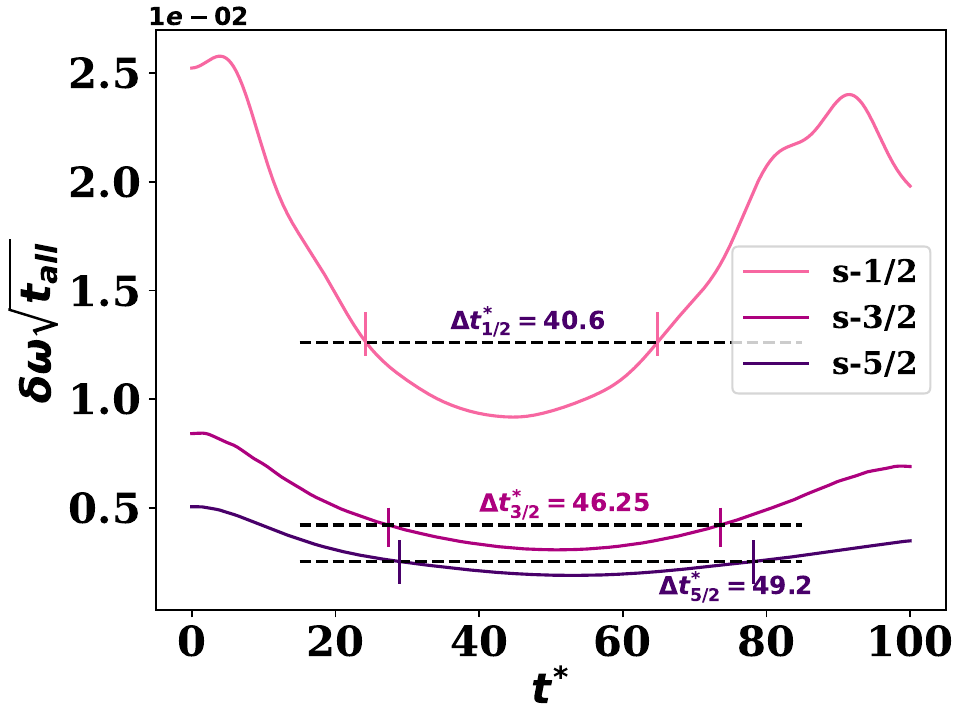}
    \caption{(Color Online.) Variation of minimum uncertainty $\delta \omega \sqrt{t_{all}}$ (ordinate) with the evolution-time $t^*$(abcissa). Different curves correspond to different quantum sensors as described in Sec. \ref{sec:model}, with different values of half-integer spins. The horizontal lines represent the SQL for the corresponding spin quantum number, $s$, given by $\frac{1}{2s\sqrt{N t_{int}}}$. 
    Here $ N = 4, \beta = 10$, $\omega = 10^{-6}$, $t_{sense} \approx t_{int} = 500\pi$ and $h/J = 0.1$. Both the axes are dimensionless.   }
    \label{fig:half_int_time}
\end{figure}
\begin{figure}
    \centering
    \includegraphics[width=\linewidth]{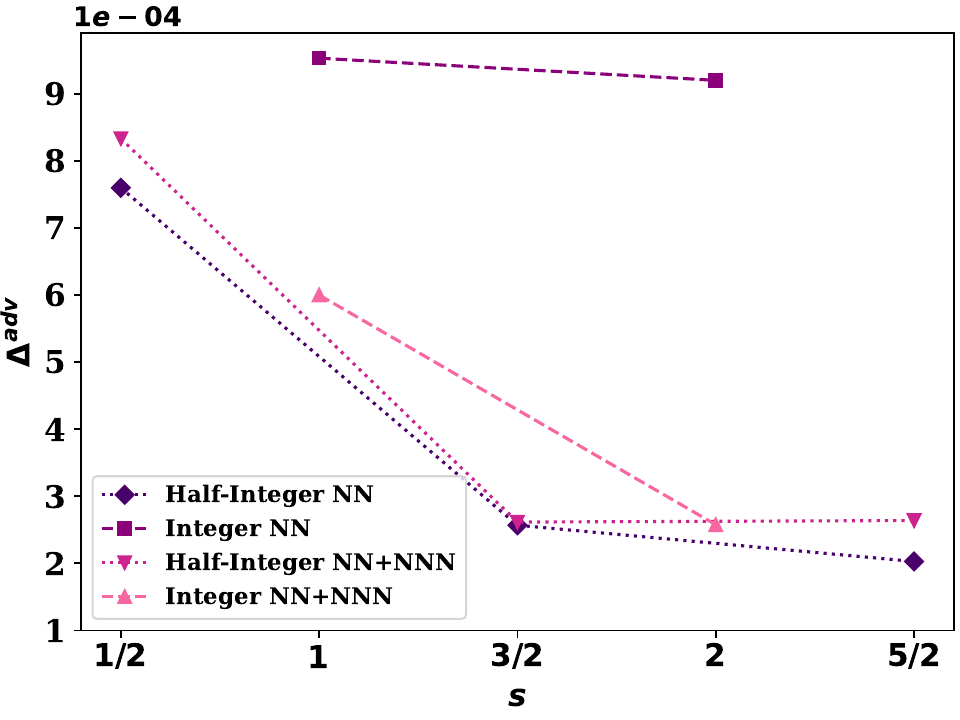}
    \caption{ (Color Online.) Dimensional gain $( \Delta^{adv} = \delta \omega_{min} - \delta\omega_{HL})\sqrt{t_{all}}$ (Ordinate) vs $s$ (abscissa). Sensors comprised of only nearest-neighbor (NN) interaction and nearest and next-nearest-neighbor (NN+NNN) interactions, as in Eq. (\ref{eq:ham_longrange}) are indicated with four- and three-sided polygons respectively. Dashed and dotted lines represent systems with integer and half-integer spins. All other specifications are the same as in Fig. \ref{fig:half_int_time}. In addition, the value of fall-off rate $\alpha$ used in case of \(H_{sen}^{\alpha}\) is optimized for each spin quantum number $s$ to obtain the highest $\Delta^{adv}$ and the optimal pair is denoted by \(\{s, \alpha\}\). In particular, we find $\{\frac{1}{2}, 2.89\}, \{1, 1.76\}, \{\frac{3}{2}, 3.56\}, \{2, 2.39\}$, and $\{\frac{5}{2}, 3.41\} $. Both the axes are dimensionless.}
    \label{Fig:half_full_fig}
\end{figure}

\setlength{\tabcolsep}{15pt} 
\begin{table}[h]
  \centering
  \begin{tabular}{cccc}
    \hline
    $s$ & $\gamma$ \\
    \hline
    1/2 &  1.15743928\\
    1 &  1.46397299\\
    3/2 & 1.34294337 \\
    2 & 1.54826223 \\
    5/2 & 1.65154762 \\
    \hline
  \end{tabular}
  \caption{The quantum fisher information initially increases with $t^*$ and attains a maximum at which $\delta \omega_{min}$ is obtained. The corresponding QFI is fitted with the curve $(t^{*})^{\gamma}$. We observe that $\gamma$ increases with $s$. We consider the same parameters as in Table. \ref{tab:min_uncert_table}.}
  \label{tab:qfi_time}
\end{table}


\begin{figure}
    \centering
    \includegraphics[width = \linewidth]{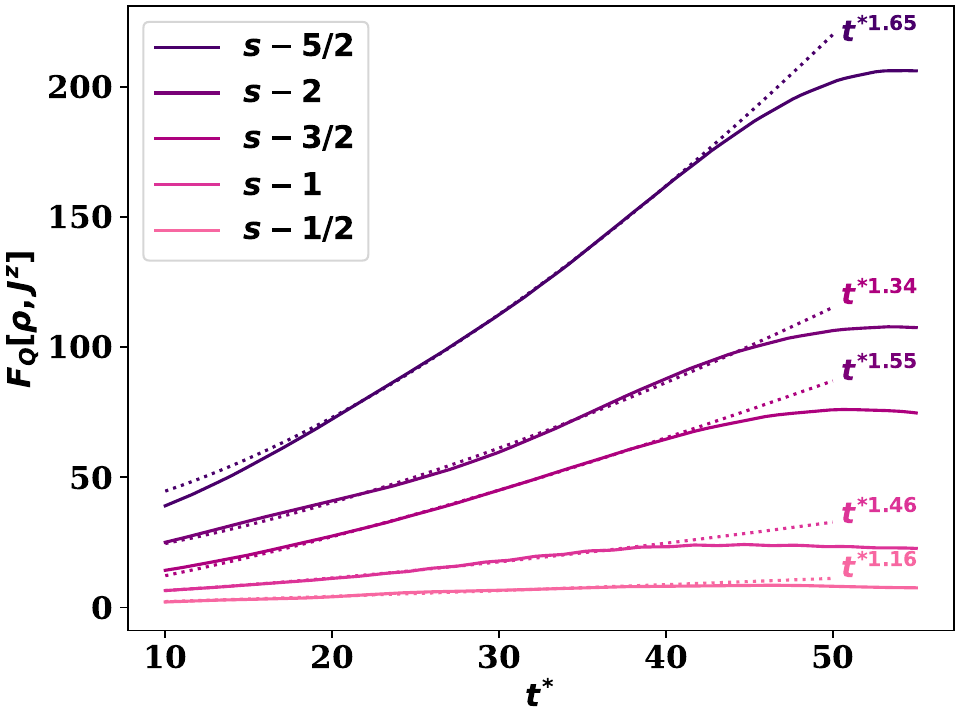}
    \caption{(Color Online.) QFI (vertical axis) with respect to $t^{*}$ (horizontal axis). The dotted lines represent the fit in the evolution-time of the corresponding maximum QFI, i.e. the fit $(t^{*})^{\gamma}$ where $\gamma$ is given in Table. \ref{tab:qfi_time}. The color shade becomes lighter to darker with the increase of dimension. All the specifications are the same as in Fig. \ref{fig:half_int_time}. Both the axes are dimensionless.}
    \label{fig:time_scaling}
\end{figure}




\subsection*{Effects of next-nearest-neighbor interactions on quantum sensors}

Long-range interacting spin systems naturally appear in trapped-ion systems, possessing highly multipartite entangled states and hence have the potential to increase quantum benefits in metrology \cite{monika2023, abolfazl_longrange_2023}.
Apart from this motivation, we also want to raise dimensional gain for the spin model consisting of integer spins. Let us consider the sensor Hamiltonian which is modified by adding the next nearest neighbor term as \cite{debasis_sadhukhan_2020, mimicking_lakkaraju_2022, dqpt_lakkaraju_2023}, 

\begin{equation}
    H_{sen}^\alpha = J\sum_{i=1}^{N-1}S^z_i S^z_{i+1} + \frac{J}{2^\alpha}\sum_{i=1}^{N-2}S^z_{i}S^x_{i+1}S^z_{i+2} + h\sum_{i=1}^{N} S^x_i, 
    \label{eq:ham_longrange}
\end{equation}
where $J =(1+\frac{1}{2^\alpha})^{-1}$ is known as the Kac factor and the fall-off rate $\alpha > 0$ represents the range of interactions. For $s=\frac{1}{2}$, the model can be easily mapped to the spinless fermions using the Jordan-Wigner transformation which leads to the next-nearest neighbor interacting Kitaev chain \cite{vodola2015longrangephases, Vodola1, VodolaThesis}. 
Following the similar steps as above, the sensors with integer spins can provide a significant advantage in terms of $\Delta^{adv}$ with the increase of spin quantum number $s$ while the sensing precision with half-integer spins remains unaltered with the additional interactions (see Fig.\ref{Fig:half_full_fig}).

\section{conclusion}
\label{sec:conclusion}

Quantum sensors (QS) are built to assess unknown parameters that are encoded in them and can provide advantages beyond classical thresholds. We delved into the realm of higher-dimensional quantum sensing, exploring the possibility of acquiring quantum advantage beyond qubit-based ones. To construct a useful quantum sensor with multiqudit systems, the first requirement is to fix the standard in precision, which is provided by the mathematical expressions for the standard quantum limit (SQL) and the Heisenberg limit for arbitrary spin quantum numbers. Additionally, we established an upper bound of quantum Fisher information (QFI) which furnishes the bound on the error in the estimation of the parameter and the connection with multipartite entanglement of the state. By considering a spin-$s$ Ising chain with $s>\frac{1}{2}$ as the quantum sensor, we demonstrated that along with system size, the Hilbert space dimension of local subsystems of QS can also be used as an efficient resource in the sensing protocols. Within this context, we explored the distinctive behaviors of half-integer and integer spins in quantum sensing, providing a promising avenue for investigating the role of indistinguishable particles, especially, the contrasting behaviors of bosons and fermions in parameter estimation scenarios. 

Furthermore, our investigations highlighted the utilities of evolution-time as a resource. On one hand, we found that the range of time-period in which the uncertainty is below SQL increases with the increase of the dimension of individual sites. On the other hand, we observed the scaling of QFI with the evolution-time, when QFI approaches maximum, increases with the augmentation of the local dimension of the subsystems, indicating the eventual attainment of the Heisenberg limit. Our work unfolds new possibilities for leveraging quantum advantages in sensing and for expanding the frontiers of precision measurement in quantum technologies with the variation of dimensions.

\begin{acknowledgements}

We acknowledge the support from the Interdisciplinary Cyber Physical Systems (ICPS) program of the Department of Science and Technology (DST), India, Grant No.: DST/ICPS/QuST/Theme- 1/2019/23. We acknowledge the use of \href{https://github.com/titaschanda/QIClib}{QIClib} -- a modern C++ library for general purpose quantum information processing and quantum computing (\url{https://titaschanda.github.io/QIClib}) and cluster computing facility at Harish-Chandra Research Institute. This research was supported in part by the ’INFOSYS scholarship for senior students’.

\end{acknowledgements}

\bibliography{bib}

\begin{thebibliography}{57}%
\makeatletter
\providecommand \@ifxundefined [1]{%
 \@ifx{#1\undefined}
}%
\providecommand \@ifnum [1]{%
 \ifnum #1\expandafter \@firstoftwo
 \else \expandafter \@secondoftwo
 \fi
}%
\providecommand \@ifx [1]{%
 \ifx #1\expandafter \@firstoftwo
 \else \expandafter \@secondoftwo
 \fi
}%
\providecommand \natexlab [1]{#1}%
\providecommand \enquote  [1]{``#1''}%
\providecommand \bibnamefont  [1]{#1}%
\providecommand \bibfnamefont [1]{#1}%
\providecommand \citenamefont [1]{#1}%
\providecommand \href@noop [0]{\@secondoftwo}%
\providecommand \href [0]{\begingroup \@sanitize@url \@href}%
\providecommand \@href[1]{\@@startlink{#1}\@@href}%
\providecommand \@@href[1]{\endgroup#1\@@endlink}%
\providecommand \@sanitize@url [0]{\catcode `\\12\catcode `\$12\catcode
  `\&12\catcode `\#12\catcode `\^12\catcode `\_12\catcode `\%12\relax}%
\providecommand \@@startlink[1]{}%
\providecommand \@@endlink[0]{}%
\providecommand \url  [0]{\begingroup\@sanitize@url \@url }%
\providecommand \@url [1]{\endgroup\@href {#1}{\urlprefix }}%
\providecommand \urlprefix  [0]{URL }%
\providecommand \Eprint [0]{\href }%
\providecommand \doibase [0]{http://dx.doi.org/}%
\providecommand \selectlanguage [0]{\@gobble}%
\providecommand \bibinfo  [0]{\@secondoftwo}%
\providecommand \bibfield  [0]{\@secondoftwo}%
\providecommand \translation [1]{[#1]}%
\providecommand \BibitemOpen [0]{}%
\providecommand \bibitemStop [0]{}%
\providecommand \bibitemNoStop [0]{.\EOS\space}%
\providecommand \EOS [0]{\spacefactor3000\relax}%
\providecommand \BibitemShut  [1]{\csname bibitem#1\endcsname}%
\let\auto@bib@innerbib\@empty
\bibitem [{\citenamefont {Braunstein}\ and\ \citenamefont
  {Caves}(1994)}]{Braunstein1994}%
  \BibitemOpen
  \bibfield  {author} {\bibinfo {author} {\bibfnamefont {S.~L.}\ \bibnamefont
  {Braunstein}}\ and\ \bibinfo {author} {\bibfnamefont {C.~M.}\ \bibnamefont
  {Caves}},\ }\href {\doibase 10.1103/PhysRevLett.72.3439} {\bibfield
  {journal} {\bibinfo  {journal} {Phys. Rev. Lett.}\ }\textbf {\bibinfo
  {volume} {72}},\ \bibinfo {pages} {3439} (\bibinfo {year}
  {1994})}\BibitemShut {NoStop}%
\bibitem [{\citenamefont {Giovannetti}\ \emph {et~al.}(2006)\citenamefont
  {Giovannetti}, \citenamefont {Lloyd},\ and\ \citenamefont
  {Maccone}}]{giovenneti2006}%
  \BibitemOpen
  \bibfield  {author} {\bibinfo {author} {\bibfnamefont {V.}~\bibnamefont
  {Giovannetti}}, \bibinfo {author} {\bibfnamefont {S.}~\bibnamefont {Lloyd}},
  \ and\ \bibinfo {author} {\bibfnamefont {L.}~\bibnamefont {Maccone}},\ }\href
  {\doibase 10.1103/PhysRevLett.96.010401} {\bibfield  {journal} {\bibinfo
  {journal} {Phys. Rev. Lett.}\ }\textbf {\bibinfo {volume} {96}},\ \bibinfo
  {pages} {010401} (\bibinfo {year} {2006})}\BibitemShut {NoStop}%
\bibitem [{\citenamefont {Giovannetti}\ \emph {et~al.}(2011)\citenamefont
  {Giovannetti}, \citenamefont {Lloyd},\ and\ \citenamefont
  {Maccone}}]{giovannetti_nature}%
  \BibitemOpen
  \bibfield  {author} {\bibinfo {author} {\bibfnamefont {V.}~\bibnamefont
  {Giovannetti}}, \bibinfo {author} {\bibfnamefont {S.}~\bibnamefont {Lloyd}},
  \ and\ \bibinfo {author} {\bibfnamefont {L.}~\bibnamefont {Maccone}},\ }\href
  {\doibase 10.1038/nphoton.2011.35} {\bibfield  {journal} {\bibinfo  {journal}
  {Nature Photonics}\ }\textbf {\bibinfo {volume} {5}},\ \bibinfo {pages} {222}
  (\bibinfo {year} {2011})}\BibitemShut {NoStop}%
\bibitem [{\citenamefont {Degen}\ \emph {et~al.}(2017)\citenamefont {Degen},
  \citenamefont {Reinhard},\ and\ \citenamefont
  {Cappellaro}}]{Sensing_RMP_2017}%
  \BibitemOpen
  \bibfield  {author} {\bibinfo {author} {\bibfnamefont {C.~L.}\ \bibnamefont
  {Degen}}, \bibinfo {author} {\bibfnamefont {F.}~\bibnamefont {Reinhard}}, \
  and\ \bibinfo {author} {\bibfnamefont {P.}~\bibnamefont {Cappellaro}},\
  }\href {\doibase 10.1103/RevModPhys.89.035002} {\bibfield  {journal}
  {\bibinfo  {journal} {Rev. Mod. Phys.}\ }\textbf {\bibinfo {volume} {89}},\
  \bibinfo {pages} {035002} (\bibinfo {year} {2017})}\BibitemShut {NoStop}%
\bibitem [{\citenamefont {Caves}(1981)}]{caves1981}%
  \BibitemOpen
  \bibfield  {author} {\bibinfo {author} {\bibfnamefont {C.~M.}\ \bibnamefont
  {Caves}},\ }\href {\doibase 10.1103/PhysRevD.23.1693} {\bibfield  {journal}
  {\bibinfo  {journal} {Phys. Rev. D}\ }\textbf {\bibinfo {volume} {23}},\
  \bibinfo {pages} {1693} (\bibinfo {year} {1981})}\BibitemShut {NoStop}%
\bibitem [{\citenamefont {Demkowicz-Dobrzański}\ \emph
  {et~al.}(2015)\citenamefont {Demkowicz-Dobrzański}, \citenamefont
  {Jarzyna},\ and\ \citenamefont {Kołodyński}}]{rafal2015}%
  \BibitemOpen
  \bibfield  {author} {\bibinfo {author} {\bibfnamefont {R.}~\bibnamefont
  {Demkowicz-Dobrzański}}, \bibinfo {author} {\bibfnamefont {M.}~\bibnamefont
  {Jarzyna}}, \ and\ \bibinfo {author} {\bibfnamefont {J.}~\bibnamefont
  {Kołodyński}}\ }(\bibinfo  {publisher} {Elsevier},\ \bibinfo {year}
  {2015})\ pp.\ \bibinfo {pages} {345--435}\BibitemShut {NoStop}%
\bibitem [{\citenamefont {Pirandola}\ \emph {et~al.}(2018)\citenamefont
  {Pirandola}, \citenamefont {Bardhan}, \citenamefont {Gehring}, \citenamefont
  {Weedbrook},\ and\ \citenamefont {Lloyd}}]{pirandola2018}%
  \BibitemOpen
  \bibfield  {author} {\bibinfo {author} {\bibfnamefont {S.}~\bibnamefont
  {Pirandola}}, \bibinfo {author} {\bibfnamefont {B.~R.}\ \bibnamefont
  {Bardhan}}, \bibinfo {author} {\bibfnamefont {T.}~\bibnamefont {Gehring}},
  \bibinfo {author} {\bibfnamefont {C.}~\bibnamefont {Weedbrook}}, \ and\
  \bibinfo {author} {\bibfnamefont {S.}~\bibnamefont {Lloyd}},\ }\href
  {https://doi.org/10.1038/s41566-018-0301-6} {\bibfield  {journal} {\bibinfo
  {journal} {Nature Photonics}\ }\textbf {\bibinfo {volume} {12}},\ \bibinfo
  {pages} {724} (\bibinfo {year} {2018})}\BibitemShut {NoStop}%
\bibitem [{\citenamefont {Candeloro}\ \emph {et~al.}(2020)\citenamefont
  {Candeloro}, \citenamefont {Boschi},\ and\ \citenamefont
  {Paris}}]{Alessandro2020}%
  \BibitemOpen
  \bibfield  {author} {\bibinfo {author} {\bibfnamefont {A.}~\bibnamefont
  {Candeloro}}, \bibinfo {author} {\bibfnamefont {C.~D.~E.}\ \bibnamefont
  {Boschi}}, \ and\ \bibinfo {author} {\bibfnamefont {M.~G.~A.}\ \bibnamefont
  {Paris}},\ }\href {\doibase 10.1103/PhysRevD.102.056012} {\bibfield
  {journal} {\bibinfo  {journal} {Phys. Rev. D}\ }\textbf {\bibinfo {volume}
  {102}},\ \bibinfo {pages} {056012} (\bibinfo {year} {2020})}\BibitemShut
  {NoStop}%
\bibitem [{\citenamefont {Albarelli}\ \emph {et~al.}(2020)\citenamefont
  {Albarelli}, \citenamefont {Barbieri}, \citenamefont {Genoni},\ and\
  \citenamefont {Gianani}}]{albarelli2020}%
  \BibitemOpen
  \bibfield  {author} {\bibinfo {author} {\bibfnamefont {F.}~\bibnamefont
  {Albarelli}}, \bibinfo {author} {\bibfnamefont {M.}~\bibnamefont {Barbieri}},
  \bibinfo {author} {\bibfnamefont {M.}~\bibnamefont {Genoni}}, \ and\ \bibinfo
  {author} {\bibfnamefont {I.}~\bibnamefont {Gianani}},\ }\href {\doibase
  https://doi.org/10.1016/j.physleta.2020.126311} {\bibfield  {journal}
  {\bibinfo  {journal} {Physics Letters A}\ }\textbf {\bibinfo {volume}
  {384}},\ \bibinfo {pages} {126311} (\bibinfo {year} {2020})}\BibitemShut
  {NoStop}%
\bibitem [{\citenamefont {Taylor}\ and\ \citenamefont
  {Bowen}(2016)}]{taylor2016}%
  \BibitemOpen
  \bibfield  {author} {\bibinfo {author} {\bibfnamefont {M.~A.}\ \bibnamefont
  {Taylor}}\ and\ \bibinfo {author} {\bibfnamefont {W.~P.}\ \bibnamefont
  {Bowen}},\ }\href {\doibase https://doi.org/10.1016/j.physrep.2015.12.002}
  {\bibfield  {journal} {\bibinfo  {journal} {Physics Reports}\ }\textbf
  {\bibinfo {volume} {615}},\ \bibinfo {pages} {1} (\bibinfo {year}
  {2016})}\BibitemShut {NoStop}%
\bibitem [{\citenamefont {Wootters}(1981)}]{wootters_1981}%
  \BibitemOpen
  \bibfield  {author} {\bibinfo {author} {\bibfnamefont {W.~K.}\ \bibnamefont
  {Wootters}},\ }\href {\doibase 10.1103/PhysRevD.23.357} {\bibfield  {journal}
  {\bibinfo  {journal} {Phys. Rev. D}\ }\textbf {\bibinfo {volume} {23}},\
  \bibinfo {pages} {357} (\bibinfo {year} {1981})}\BibitemShut {NoStop}%
\bibitem [{\citenamefont {Helstrom}(1976)}]{Helstrom1976}%
  \BibitemOpen
  \bibfield  {author} {\bibinfo {author} {\bibfnamefont {C.~W.}\ \bibnamefont
  {Helstrom}},\ }\href
  {https://www.sciencedirect.com/bookseries/mathematics-in-science-and-engineering/vol/123/suppl/C}
  {\emph {\bibinfo {title} {Quantum Detection and Estimation Theory}}}\
  (\bibinfo  {publisher} {Academic Press},\ \bibinfo {year} {1976})\BibitemShut
  {NoStop}%
\bibitem [{\citenamefont {Horodecki}\ \emph {et~al.}(2009)\citenamefont
  {Horodecki}, \citenamefont {Horodecki}, \citenamefont {Horodecki},\ and\
  \citenamefont {Horodecki}}]{HHHH_entanglement}%
  \BibitemOpen
  \bibfield  {author} {\bibinfo {author} {\bibfnamefont {R.}~\bibnamefont
  {Horodecki}}, \bibinfo {author} {\bibfnamefont {P.}~\bibnamefont
  {Horodecki}}, \bibinfo {author} {\bibfnamefont {M.}~\bibnamefont
  {Horodecki}}, \ and\ \bibinfo {author} {\bibfnamefont {K.}~\bibnamefont
  {Horodecki}},\ }\href {\doibase 10.1103/RevModPhys.81.865} {\bibfield
  {journal} {\bibinfo  {journal} {Rev. Mod. Phys.}\ }\textbf {\bibinfo {volume}
  {81}},\ \bibinfo {pages} {865} (\bibinfo {year} {2009})}\BibitemShut
  {NoStop}%
\bibitem [{\citenamefont {Pezz\'e}\ and\ \citenamefont
  {Smerzi}(2009)}]{luca_augusto_prl_2009}%
  \BibitemOpen
  \bibfield  {author} {\bibinfo {author} {\bibfnamefont {L.}~\bibnamefont
  {Pezz\'e}}\ and\ \bibinfo {author} {\bibfnamefont {A.}~\bibnamefont
  {Smerzi}},\ }\href {\doibase 10.1103/PhysRevLett.102.100401} {\bibfield
  {journal} {\bibinfo  {journal} {Phys. Rev. Lett.}\ }\textbf {\bibinfo
  {volume} {102}},\ \bibinfo {pages} {100401} (\bibinfo {year}
  {2009})}\BibitemShut {NoStop}%
\bibitem [{\citenamefont {Mishra}\ and\ \citenamefont
  {Bayat}(2021)}]{utkarsh2021}%
  \BibitemOpen
  \bibfield  {author} {\bibinfo {author} {\bibfnamefont {U.}~\bibnamefont
  {Mishra}}\ and\ \bibinfo {author} {\bibfnamefont {A.}~\bibnamefont {Bayat}},\
  }\href {\doibase 10.1103/PhysRevLett.127.080504} {\bibfield  {journal}
  {\bibinfo  {journal} {Phys. Rev. Lett.}\ }\textbf {\bibinfo {volume} {127}},\
  \bibinfo {pages} {080504} (\bibinfo {year} {2021})}\BibitemShut {NoStop}%
\bibitem [{\citenamefont {Garbe}\ \emph {et~al.}(2022)\citenamefont {Garbe},
  \citenamefont {Abah}, \citenamefont {Felicetti},\ and\ \citenamefont
  {Puebla}}]{louis2022}%
  \BibitemOpen
  \bibfield  {author} {\bibinfo {author} {\bibfnamefont {L.}~\bibnamefont
  {Garbe}}, \bibinfo {author} {\bibfnamefont {O.}~\bibnamefont {Abah}},
  \bibinfo {author} {\bibfnamefont {S.}~\bibnamefont {Felicetti}}, \ and\
  \bibinfo {author} {\bibfnamefont {R.}~\bibnamefont {Puebla}},\ }\href
  {\doibase 10.1103/PhysRevResearch.4.043061} {\bibfield  {journal} {\bibinfo
  {journal} {Phys. Rev. Res.}\ }\textbf {\bibinfo {volume} {4}},\ \bibinfo
  {pages} {043061} (\bibinfo {year} {2022})}\BibitemShut {NoStop}%
\bibitem [{\citenamefont {Schnabel}(2017)}]{schnabel2017}%
  \BibitemOpen
  \bibfield  {author} {\bibinfo {author} {\bibfnamefont {R.}~\bibnamefont
  {Schnabel}},\ }\href {\doibase https://doi.org/10.1016/j.physrep.2017.04.001}
  {\bibfield  {journal} {\bibinfo  {journal} {Physics Reports}\ }\textbf
  {\bibinfo {volume} {684}},\ \bibinfo {pages} {1} (\bibinfo {year} {2017})},\
  \bibinfo {note} {squeezed states of light and their applications in laser
  interferometers}\BibitemShut {NoStop}%
\bibitem [{\citenamefont {Zhao}\ \emph {et~al.}(2020)\citenamefont {Zhao},
  \citenamefont {Yang},\ and\ \citenamefont {Chiribella}}]{zhao2020}%
  \BibitemOpen
  \bibfield  {author} {\bibinfo {author} {\bibfnamefont {X.}~\bibnamefont
  {Zhao}}, \bibinfo {author} {\bibfnamefont {Y.}~\bibnamefont {Yang}}, \ and\
  \bibinfo {author} {\bibfnamefont {G.}~\bibnamefont {Chiribella}},\ }\href
  {\doibase 10.1103/PhysRevLett.124.190503} {\bibfield  {journal} {\bibinfo
  {journal} {Phys. Rev. Lett.}\ }\textbf {\bibinfo {volume} {124}},\ \bibinfo
  {pages} {190503} (\bibinfo {year} {2020})}\BibitemShut {NoStop}%
\bibitem [{\citenamefont {Sone}\ \emph {et~al.}(2019)\citenamefont {Sone},
  \citenamefont {Zhuang}, \citenamefont {Li}, \citenamefont {Liu},\ and\
  \citenamefont {Cappellaro}}]{sone2019}%
  \BibitemOpen
  \bibfield  {author} {\bibinfo {author} {\bibfnamefont {A.}~\bibnamefont
  {Sone}}, \bibinfo {author} {\bibfnamefont {Q.}~\bibnamefont {Zhuang}},
  \bibinfo {author} {\bibfnamefont {C.}~\bibnamefont {Li}}, \bibinfo {author}
  {\bibfnamefont {Y.-X.}\ \bibnamefont {Liu}}, \ and\ \bibinfo {author}
  {\bibfnamefont {P.}~\bibnamefont {Cappellaro}},\ }\href {\doibase
  10.1103/PhysRevA.99.052318} {\bibfield  {journal} {\bibinfo  {journal} {Phys.
  Rev. A}\ }\textbf {\bibinfo {volume} {99}},\ \bibinfo {pages} {052318}
  (\bibinfo {year} {2019})}\BibitemShut {NoStop}%
\bibitem [{\citenamefont {Yadin}\ \emph {et~al.}(2021)\citenamefont {Yadin},
  \citenamefont {Fadel},\ and\ \citenamefont {Gessner}}]{yadin2021}%
  \BibitemOpen
  \bibfield  {author} {\bibinfo {author} {\bibfnamefont {B.}~\bibnamefont
  {Yadin}}, \bibinfo {author} {\bibfnamefont {M.}~\bibnamefont {Fadel}}, \ and\
  \bibinfo {author} {\bibfnamefont {M.}~\bibnamefont {Gessner}},\ }\href
  {https://www.nature.com/articles/s41467-021-22353-3} {\bibfield  {journal}
  {\bibinfo  {journal} {Nature communications}\ }\textbf {\bibinfo {volume}
  {12}},\ \bibinfo {pages} {2410} (\bibinfo {year} {2021})}\BibitemShut
  {NoStop}%
\bibitem [{\citenamefont {Lee}\ \emph {et~al.}(2023)\citenamefont {Lee},
  \citenamefont {Lin}, \citenamefont {Miranowicz}, \citenamefont {Nori},
  \citenamefont {Ku},\ and\ \citenamefont {Chen}}]{lee2023}%
  \BibitemOpen
  \bibfield  {author} {\bibinfo {author} {\bibfnamefont {K.-Y.}\ \bibnamefont
  {Lee}}, \bibinfo {author} {\bibfnamefont {J.-D.}\ \bibnamefont {Lin}},
  \bibinfo {author} {\bibfnamefont {A.}~\bibnamefont {Miranowicz}}, \bibinfo
  {author} {\bibfnamefont {F.}~\bibnamefont {Nori}}, \bibinfo {author}
  {\bibfnamefont {H.-Y.}\ \bibnamefont {Ku}}, \ and\ \bibinfo {author}
  {\bibfnamefont {Y.-N.}\ \bibnamefont {Chen}},\ }\href {\doibase
  10.1103/PhysRevResearch.5.013103} {\bibfield  {journal} {\bibinfo  {journal}
  {Phys. Rev. Res.}\ }\textbf {\bibinfo {volume} {5}},\ \bibinfo {pages}
  {013103} (\bibinfo {year} {2023})}\BibitemShut {NoStop}%
\bibitem [{\citenamefont {Fr\'erot}\ and\ \citenamefont
  {Roscilde}(2018)}]{irenee2018}%
  \BibitemOpen
  \bibfield  {author} {\bibinfo {author} {\bibfnamefont {I.}~\bibnamefont
  {Fr\'erot}}\ and\ \bibinfo {author} {\bibfnamefont {T.}~\bibnamefont
  {Roscilde}},\ }\href {\doibase 10.1103/PhysRevLett.121.020402} {\bibfield
  {journal} {\bibinfo  {journal} {Phys. Rev. Lett.}\ }\textbf {\bibinfo
  {volume} {121}},\ \bibinfo {pages} {020402} (\bibinfo {year}
  {2018})}\BibitemShut {NoStop}%
\bibitem [{\citenamefont {Rams}\ \emph {et~al.}(2018)\citenamefont {Rams},
  \citenamefont {Sierant}, \citenamefont {Dutta}, \citenamefont {Horodecki},\
  and\ \citenamefont {Zakrzewski}}]{rams2018}%
  \BibitemOpen
  \bibfield  {author} {\bibinfo {author} {\bibfnamefont {M.~M.}\ \bibnamefont
  {Rams}}, \bibinfo {author} {\bibfnamefont {P.}~\bibnamefont {Sierant}},
  \bibinfo {author} {\bibfnamefont {O.}~\bibnamefont {Dutta}}, \bibinfo
  {author} {\bibfnamefont {P.}~\bibnamefont {Horodecki}}, \ and\ \bibinfo
  {author} {\bibfnamefont {J.}~\bibnamefont {Zakrzewski}},\ }\href {\doibase
  10.1103/PhysRevX.8.021022} {\bibfield  {journal} {\bibinfo  {journal} {Phys.
  Rev. X}\ }\textbf {\bibinfo {volume} {8}},\ \bibinfo {pages} {021022}
  (\bibinfo {year} {2018})}\BibitemShut {NoStop}%
\bibitem [{\citenamefont {Montenegro}\ \emph {et~al.}(2022)\citenamefont
  {Montenegro}, \citenamefont {Jones}, \citenamefont {Bose},\ and\
  \citenamefont {Bayat}}]{victor2022}%
  \BibitemOpen
  \bibfield  {author} {\bibinfo {author} {\bibfnamefont {V.}~\bibnamefont
  {Montenegro}}, \bibinfo {author} {\bibfnamefont {G.~S.}\ \bibnamefont
  {Jones}}, \bibinfo {author} {\bibfnamefont {S.}~\bibnamefont {Bose}}, \ and\
  \bibinfo {author} {\bibfnamefont {A.}~\bibnamefont {Bayat}},\ }\href
  {\doibase 10.1103/PhysRevLett.129.120503} {\bibfield  {journal} {\bibinfo
  {journal} {Phys. Rev. Lett.}\ }\textbf {\bibinfo {volume} {129}},\ \bibinfo
  {pages} {120503} (\bibinfo {year} {2022})}\BibitemShut {NoStop}%
\bibitem [{\citenamefont {Roy}\ and\ \citenamefont
  {Braunstein}(2008)}]{roy2008}%
  \BibitemOpen
  \bibfield  {author} {\bibinfo {author} {\bibfnamefont {S.~M.}\ \bibnamefont
  {Roy}}\ and\ \bibinfo {author} {\bibfnamefont {S.~L.}\ \bibnamefont
  {Braunstein}},\ }\href {\doibase 10.1103/PhysRevLett.100.220501} {\bibfield
  {journal} {\bibinfo  {journal} {Phys. Rev. Lett.}\ }\textbf {\bibinfo
  {volume} {100}},\ \bibinfo {pages} {220501} (\bibinfo {year}
  {2008})}\BibitemShut {NoStop}%
\bibitem [{\citenamefont {Gietka}\ \emph {et~al.}(2022)\citenamefont {Gietka},
  \citenamefont {Ruks},\ and\ \citenamefont {Busch}}]{Gietka2022}%
  \BibitemOpen
  \bibfield  {author} {\bibinfo {author} {\bibfnamefont {K.}~\bibnamefont
  {Gietka}}, \bibinfo {author} {\bibfnamefont {L.}~\bibnamefont {Ruks}}, \ and\
  \bibinfo {author} {\bibfnamefont {T.}~\bibnamefont {Busch}},\ }\href
  {\doibase 10.22331/q-2022-04-27-700} {\bibfield  {journal} {\bibinfo
  {journal} {{Quantum}}\ }\textbf {\bibinfo {volume} {6}},\ \bibinfo {pages}
  {700} (\bibinfo {year} {2022})}\BibitemShut {NoStop}%
\bibitem [{\citenamefont {Boixo}\ \emph {et~al.}(2007)\citenamefont {Boixo},
  \citenamefont {Flammia}, \citenamefont {Caves},\ and\ \citenamefont
  {Geremia}}]{sergio2007}%
  \BibitemOpen
  \bibfield  {author} {\bibinfo {author} {\bibfnamefont {S.}~\bibnamefont
  {Boixo}}, \bibinfo {author} {\bibfnamefont {S.~T.}\ \bibnamefont {Flammia}},
  \bibinfo {author} {\bibfnamefont {C.~M.}\ \bibnamefont {Caves}}, \ and\
  \bibinfo {author} {\bibfnamefont {J.}~\bibnamefont {Geremia}},\ }\href
  {\doibase 10.1103/PhysRevLett.98.090401} {\bibfield  {journal} {\bibinfo
  {journal} {Phys. Rev. Lett.}\ }\textbf {\bibinfo {volume} {98}},\ \bibinfo
  {pages} {090401} (\bibinfo {year} {2007})}\BibitemShut {NoStop}%
\bibitem [{\citenamefont {Dooley}\ \emph {et~al.}(2023)\citenamefont {Dooley},
  \citenamefont {Pappalardi},\ and\ \citenamefont {Goold}}]{shane2023}%
  \BibitemOpen
  \bibfield  {author} {\bibinfo {author} {\bibfnamefont {S.}~\bibnamefont
  {Dooley}}, \bibinfo {author} {\bibfnamefont {S.}~\bibnamefont {Pappalardi}},
  \ and\ \bibinfo {author} {\bibfnamefont {J.}~\bibnamefont {Goold}},\ }\href
  {\doibase 10.1103/PhysRevB.107.035123} {\bibfield  {journal} {\bibinfo
  {journal} {Phys. Rev. B}\ }\textbf {\bibinfo {volume} {107}},\ \bibinfo
  {pages} {035123} (\bibinfo {year} {2023})}\BibitemShut {NoStop}%
\bibitem [{\citenamefont {He}\ \emph {et~al.}(2023)\citenamefont {He},
  \citenamefont {Yousefjani},\ and\ \citenamefont {Bayat}}]{Xingjian2023}%
  \BibitemOpen
  \bibfield  {author} {\bibinfo {author} {\bibfnamefont {X.}~\bibnamefont
  {He}}, \bibinfo {author} {\bibfnamefont {R.}~\bibnamefont {Yousefjani}}, \
  and\ \bibinfo {author} {\bibfnamefont {A.}~\bibnamefont {Bayat}},\ }\href
  {\doibase 10.1103/PhysRevLett.131.010801} {\bibfield  {journal} {\bibinfo
  {journal} {Phys. Rev. Lett.}\ }\textbf {\bibinfo {volume} {131}},\ \bibinfo
  {pages} {010801} (\bibinfo {year} {2023})}\BibitemShut {NoStop}%
\bibitem [{\citenamefont {Sahoo}\ \emph {et~al.}(2023)\citenamefont {Sahoo},
  \citenamefont {Mishra},\ and\ \citenamefont {Rakshit}}]{sahoo2023}%
  \BibitemOpen
  \bibfield  {author} {\bibinfo {author} {\bibfnamefont {A.}~\bibnamefont
  {Sahoo}}, \bibinfo {author} {\bibfnamefont {U.}~\bibnamefont {Mishra}}, \
  and\ \bibinfo {author} {\bibfnamefont {D.}~\bibnamefont {Rakshit}},\
  }\href@noop {} {\enquote {\bibinfo {title} {Localization driven quantum
  sensing},}\ } (\bibinfo {year} {2023}),\ \Eprint
  {http://arxiv.org/abs/2305.02315} {arXiv:2305.02315 [quant-ph]} \BibitemShut
  {NoStop}%
\bibitem [{\citenamefont {Shlyakhov}\ \emph {et~al.}(2018)\citenamefont
  {Shlyakhov}, \citenamefont {Zemlyanov}, \citenamefont {Suslov}, \citenamefont
  {Lebedev}, \citenamefont {Paraoanu}, \citenamefont {Lesovik},\ and\
  \citenamefont {Blatter}}]{Shlyakhov2018}%
  \BibitemOpen
  \bibfield  {author} {\bibinfo {author} {\bibfnamefont {A.~R.}\ \bibnamefont
  {Shlyakhov}}, \bibinfo {author} {\bibfnamefont {V.~V.}\ \bibnamefont
  {Zemlyanov}}, \bibinfo {author} {\bibfnamefont {M.~V.}\ \bibnamefont
  {Suslov}}, \bibinfo {author} {\bibfnamefont {A.~V.}\ \bibnamefont {Lebedev}},
  \bibinfo {author} {\bibfnamefont {G.~S.}\ \bibnamefont {Paraoanu}}, \bibinfo
  {author} {\bibfnamefont {G.~B.}\ \bibnamefont {Lesovik}}, \ and\ \bibinfo
  {author} {\bibfnamefont {G.}~\bibnamefont {Blatter}},\ }\href {\doibase
  10.1103/PhysRevA.97.022115} {\bibfield  {journal} {\bibinfo  {journal} {Phys.
  Rev. A}\ }\textbf {\bibinfo {volume} {97}},\ \bibinfo {pages} {022115}
  (\bibinfo {year} {2018})}\BibitemShut {NoStop}%
\bibitem [{\citenamefont {Dooley}(2021)}]{shane2021}%
  \BibitemOpen
  \bibfield  {author} {\bibinfo {author} {\bibfnamefont {S.}~\bibnamefont
  {Dooley}},\ }\href {\doibase 10.1103/PRXQuantum.2.020330} {\bibfield
  {journal} {\bibinfo  {journal} {PRX Quantum}\ }\textbf {\bibinfo {volume}
  {2}},\ \bibinfo {pages} {020330} (\bibinfo {year} {2021})}\BibitemShut
  {NoStop}%
\bibitem [{\citenamefont {Bartlett}\ \emph {et~al.}(2002)\citenamefont
  {Bartlett}, \citenamefont {de~Guise},\ and\ \citenamefont
  {Sanders}}]{sanders02}%
  \BibitemOpen
  \bibfield  {author} {\bibinfo {author} {\bibfnamefont {S.~D.}\ \bibnamefont
  {Bartlett}}, \bibinfo {author} {\bibfnamefont {H.}~\bibnamefont {de~Guise}},
  \ and\ \bibinfo {author} {\bibfnamefont {B.~C.}\ \bibnamefont {Sanders}},\
  }\href {\doibase 10.1103/PhysRevA.65.052316} {\bibfield  {journal} {\bibinfo
  {journal} {Phys. Rev. A}\ }\textbf {\bibinfo {volume} {65}},\ \bibinfo
  {pages} {052316} (\bibinfo {year} {2002})}\BibitemShut {NoStop}%
\bibitem [{\citenamefont {Cozzolino}\ \emph {et~al.}(2019)\citenamefont
  {Cozzolino}, \citenamefont {Da~Lio}, \citenamefont {Bacco},\ and\
  \citenamefont {Oxenløwe}}]{quditQtech}%
  \BibitemOpen
  \bibfield  {author} {\bibinfo {author} {\bibfnamefont {D.}~\bibnamefont
  {Cozzolino}}, \bibinfo {author} {\bibfnamefont {B.}~\bibnamefont {Da~Lio}},
  \bibinfo {author} {\bibfnamefont {D.}~\bibnamefont {Bacco}}, \ and\ \bibinfo
  {author} {\bibfnamefont {L.~K.}\ \bibnamefont {Oxenløwe}},\ }\href {\doibase
  https://doi.org/10.1002/qute.201900038} {\bibfield  {journal} {\bibinfo
  {journal} {Advanced Quantum Technologies}\ }\textbf {\bibinfo {volume} {2}},\
  \bibinfo {pages} {1900038} (\bibinfo {year} {2019})}\BibitemShut {NoStop}%
\bibitem [{\citenamefont {Wang}\ \emph {et~al.}(2020)\citenamefont {Wang},
  \citenamefont {Hu}, \citenamefont {Sanders},\ and\ \citenamefont
  {Kais}}]{wang2020}%
  \BibitemOpen
  \bibfield  {author} {\bibinfo {author} {\bibfnamefont {Y.}~\bibnamefont
  {Wang}}, \bibinfo {author} {\bibfnamefont {Z.}~\bibnamefont {Hu}}, \bibinfo
  {author} {\bibfnamefont {B.~C.}\ \bibnamefont {Sanders}}, \ and\ \bibinfo
  {author} {\bibfnamefont {S.}~\bibnamefont {Kais}},\ }\href
  {https://www.frontiersin.org/articles/10.3389/fphy.2020.589504/full}
  {\bibfield  {journal} {\bibinfo  {journal} {Frontiers in Physics}\ }\textbf
  {\bibinfo {volume} {8}},\ \bibinfo {pages} {589504} (\bibinfo {year}
  {2020})}\BibitemShut {NoStop}%
\bibitem [{\citenamefont {Ghosh}\ and\ \citenamefont
  {Sen(De)}(2022)}]{srijon2022}%
  \BibitemOpen
  \bibfield  {author} {\bibinfo {author} {\bibfnamefont {S.}~\bibnamefont
  {Ghosh}}\ and\ \bibinfo {author} {\bibfnamefont {A.}~\bibnamefont
  {Sen(De)}},\ }\href {\doibase 10.1103/PhysRevA.105.022628} {\bibfield
  {journal} {\bibinfo  {journal} {Phys. Rev. A}\ }\textbf {\bibinfo {volume}
  {105}},\ \bibinfo {pages} {022628} (\bibinfo {year} {2022})}\BibitemShut
  {NoStop}%
\bibitem [{\citenamefont {Konar}\ \emph {et~al.}(2023)\citenamefont {Konar},
  \citenamefont {Ghosh}, \citenamefont {Pal},\ and\ \citenamefont
  {Sen(De)}}]{tanoy2023}%
  \BibitemOpen
  \bibfield  {author} {\bibinfo {author} {\bibfnamefont {T.~K.}\ \bibnamefont
  {Konar}}, \bibinfo {author} {\bibfnamefont {S.}~\bibnamefont {Ghosh}},
  \bibinfo {author} {\bibfnamefont {A.~K.}\ \bibnamefont {Pal}}, \ and\
  \bibinfo {author} {\bibfnamefont {A.}~\bibnamefont {Sen(De)}},\ }\href
  {\doibase 10.1103/PhysRevA.107.032602} {\bibfield  {journal} {\bibinfo
  {journal} {Phys. Rev. A}\ }\textbf {\bibinfo {volume} {107}},\ \bibinfo
  {pages} {032602} (\bibinfo {year} {2023})}\BibitemShut {NoStop}%
\bibitem [{\citenamefont {Erhard}\ \emph {et~al.}(2018)\citenamefont {Erhard},
  \citenamefont {Fickler}, \citenamefont {Krenn},\ and\ \citenamefont
  {Zeilinger}}]{photonexp}%
  \BibitemOpen
  \bibfield  {author} {\bibinfo {author} {\bibfnamefont {M.}~\bibnamefont
  {Erhard}}, \bibinfo {author} {\bibfnamefont {R.}~\bibnamefont {Fickler}},
  \bibinfo {author} {\bibfnamefont {M.}~\bibnamefont {Krenn}}, \ and\ \bibinfo
  {author} {\bibfnamefont {A.}~\bibnamefont {Zeilinger}},\ }\href
  {https://doi.org/10.1038/lsa.2017.146} {\bibfield  {journal} {\bibinfo
  {journal} {Light: Science \& Applications}\ }\textbf {\bibinfo {volume}
  {7}},\ \bibinfo {pages} {17146} (\bibinfo {year} {2018})}\BibitemShut
  {NoStop}%
\bibitem [{\citenamefont {Low}\ \emph {et~al.}(2020)\citenamefont {Low},
  \citenamefont {White}, \citenamefont {Cox}, \citenamefont {Day},\ and\
  \citenamefont {Senko}}]{ionqudits}%
  \BibitemOpen
  \bibfield  {author} {\bibinfo {author} {\bibfnamefont {P.~J.}\ \bibnamefont
  {Low}}, \bibinfo {author} {\bibfnamefont {B.~M.}\ \bibnamefont {White}},
  \bibinfo {author} {\bibfnamefont {A.~A.}\ \bibnamefont {Cox}}, \bibinfo
  {author} {\bibfnamefont {M.~L.}\ \bibnamefont {Day}}, \ and\ \bibinfo
  {author} {\bibfnamefont {C.}~\bibnamefont {Senko}},\ }\href {\doibase
  10.1103/PhysRevResearch.2.033128} {\bibfield  {journal} {\bibinfo  {journal}
  {Phys. Rev. Research}\ }\textbf {\bibinfo {volume} {2}},\ \bibinfo {pages}
  {033128} (\bibinfo {year} {2020})}\BibitemShut {NoStop}%
\bibitem [{\citenamefont {Soltamov}\ \emph {et~al.}(2019)\citenamefont
  {Soltamov}, \citenamefont {Kasper}, \citenamefont {Poshakinskiy},
  \citenamefont {Anisimov}, \citenamefont {Mokhov}, \citenamefont {Sperlich},
  \citenamefont {Tarasenko}, \citenamefont {Baranov}, \citenamefont
  {Astakhov},\ and\ \citenamefont {Dyakonov}}]{NVcentrequdits}%
  \BibitemOpen
  \bibfield  {author} {\bibinfo {author} {\bibfnamefont {V.~A.}\ \bibnamefont
  {Soltamov}}, \bibinfo {author} {\bibfnamefont {C.}~\bibnamefont {Kasper}},
  \bibinfo {author} {\bibfnamefont {A.~V.}\ \bibnamefont {Poshakinskiy}},
  \bibinfo {author} {\bibfnamefont {A.~N.}\ \bibnamefont {Anisimov}}, \bibinfo
  {author} {\bibfnamefont {E.~N.}\ \bibnamefont {Mokhov}}, \bibinfo {author}
  {\bibfnamefont {A.}~\bibnamefont {Sperlich}}, \bibinfo {author}
  {\bibfnamefont {S.~A.}\ \bibnamefont {Tarasenko}}, \bibinfo {author}
  {\bibfnamefont {P.~G.}\ \bibnamefont {Baranov}}, \bibinfo {author}
  {\bibfnamefont {G.~V.}\ \bibnamefont {Astakhov}}, \ and\ \bibinfo {author}
  {\bibfnamefont {V.}~\bibnamefont {Dyakonov}},\ }\href@noop {} {\bibfield
  {journal} {\bibinfo  {journal} {Nature Communications}\ }\textbf {\bibinfo
  {volume} {10}},\ \bibinfo {pages} {1678} (\bibinfo {year}
  {2019})}\BibitemShut {NoStop}%
\bibitem [{\citenamefont {Neeley}\ \emph {et~al.}(2009)\citenamefont {Neeley},
  \citenamefont {Ansmann}, \citenamefont {Bialczak}, \citenamefont {Hofheinz},
  \citenamefont {Lucero}, \citenamefont {O'Connell}, \citenamefont {Sank},
  \citenamefont {Wang}, \citenamefont {Wenner}, \citenamefont {Cleland},
  \citenamefont {Geller},\ and\ \citenamefont {Martinis}}]{supercondqudits}%
  \BibitemOpen
  \bibfield  {author} {\bibinfo {author} {\bibfnamefont {M.}~\bibnamefont
  {Neeley}}, \bibinfo {author} {\bibfnamefont {M.}~\bibnamefont {Ansmann}},
  \bibinfo {author} {\bibfnamefont {R.~C.}\ \bibnamefont {Bialczak}}, \bibinfo
  {author} {\bibfnamefont {M.}~\bibnamefont {Hofheinz}}, \bibinfo {author}
  {\bibfnamefont {E.}~\bibnamefont {Lucero}}, \bibinfo {author} {\bibfnamefont
  {A.~D.}\ \bibnamefont {O'Connell}}, \bibinfo {author} {\bibfnamefont
  {D.}~\bibnamefont {Sank}}, \bibinfo {author} {\bibfnamefont {H.}~\bibnamefont
  {Wang}}, \bibinfo {author} {\bibfnamefont {J.}~\bibnamefont {Wenner}},
  \bibinfo {author} {\bibfnamefont {A.~N.}\ \bibnamefont {Cleland}}, \bibinfo
  {author} {\bibfnamefont {M.~R.}\ \bibnamefont {Geller}}, \ and\ \bibinfo
  {author} {\bibfnamefont {J.~M.}\ \bibnamefont {Martinis}},\ }\href {\doibase
  10.1126/science.1173440} {\bibfield  {journal} {\bibinfo  {journal}
  {Science}\ }\textbf {\bibinfo {volume} {325}},\ \bibinfo {pages} {722}
  (\bibinfo {year} {2009})}\BibitemShut {NoStop}%
\bibitem [{\citenamefont {T\'oth}\ and\ \citenamefont
  {V\'ertesi}(2018{\natexlab{a}})}]{GToth2018}%
  \BibitemOpen
  \bibfield  {author} {\bibinfo {author} {\bibfnamefont {G.}~\bibnamefont
  {T\'oth}}\ and\ \bibinfo {author} {\bibfnamefont {T.}~\bibnamefont
  {V\'ertesi}},\ }\href {\doibase 10.1103/PhysRevLett.120.020506} {\bibfield
  {journal} {\bibinfo  {journal} {Phys. Rev. Lett.}\ }\textbf {\bibinfo
  {volume} {120}},\ \bibinfo {pages} {020506} (\bibinfo {year}
  {2018}{\natexlab{a}})}\BibitemShut {NoStop}%
\bibitem [{\citenamefont {PARIS}(2009)}]{paris2009}%
  \BibitemOpen
  \bibfield  {author} {\bibinfo {author} {\bibfnamefont {M.~G.~A.}\
  \bibnamefont {PARIS}},\ }\href {\doibase 10.1142/S0219749909004839}
  {\bibfield  {journal} {\bibinfo  {journal} {International Journal of Quantum
  Information}\ }\textbf {\bibinfo {volume} {07}},\ \bibinfo {pages} {125}
  (\bibinfo {year} {2009})},\ \Eprint
  {http://arxiv.org/abs/https://doi.org/10.1142/S0219749909004839}
  {https://doi.org/10.1142/S0219749909004839} \BibitemShut {NoStop}%
\bibitem [{\citenamefont {T\'oth}\ and\ \citenamefont
  {V\'ertesi}(2018{\natexlab{b}})}]{geza_prl_bound_2018}%
  \BibitemOpen
  \bibfield  {author} {\bibinfo {author} {\bibfnamefont {G.}~\bibnamefont
  {T\'oth}}\ and\ \bibinfo {author} {\bibfnamefont {T.}~\bibnamefont
  {V\'ertesi}},\ }\href {\doibase 10.1103/PhysRevLett.120.020506} {\bibfield
  {journal} {\bibinfo  {journal} {Phys. Rev. Lett.}\ }\textbf {\bibinfo
  {volume} {120}},\ \bibinfo {pages} {020506} (\bibinfo {year}
  {2018}{\natexlab{b}})}\BibitemShut {NoStop}%
\bibitem [{\citenamefont {T\'oth}(2012)}]{geza2012}%
  \BibitemOpen
  \bibfield  {author} {\bibinfo {author} {\bibfnamefont {G.}~\bibnamefont
  {T\'oth}},\ }\href {\doibase 10.1103/PhysRevA.85.022322} {\bibfield
  {journal} {\bibinfo  {journal} {Phys. Rev. A}\ }\textbf {\bibinfo {volume}
  {85}},\ \bibinfo {pages} {022322} (\bibinfo {year} {2012})}\BibitemShut
  {NoStop}%
\bibitem [{\citenamefont {Huelga}\ \emph {et~al.}(1997)\citenamefont {Huelga},
  \citenamefont {Macchiavello}, \citenamefont {Pellizzari}, \citenamefont
  {Ekert}, \citenamefont {Plenio},\ and\ \citenamefont {Cirac}}]{huelga}%
  \BibitemOpen
  \bibfield  {author} {\bibinfo {author} {\bibfnamefont {S.~F.}\ \bibnamefont
  {Huelga}}, \bibinfo {author} {\bibfnamefont {C.}~\bibnamefont
  {Macchiavello}}, \bibinfo {author} {\bibfnamefont {T.}~\bibnamefont
  {Pellizzari}}, \bibinfo {author} {\bibfnamefont {A.~K.}\ \bibnamefont
  {Ekert}}, \bibinfo {author} {\bibfnamefont {M.~B.}\ \bibnamefont {Plenio}}, \
  and\ \bibinfo {author} {\bibfnamefont {J.~I.}\ \bibnamefont {Cirac}},\ }\href
  {\doibase 10.1103/PhysRevLett.79.3865} {\bibfield  {journal} {\bibinfo
  {journal} {Phys. Rev. Lett.}\ }\textbf {\bibinfo {volume} {79}},\ \bibinfo
  {pages} {3865} (\bibinfo {year} {1997})}\BibitemShut {NoStop}%
\bibitem [{\citenamefont {Matsuzaki}\ \emph {et~al.}(2011)\citenamefont
  {Matsuzaki}, \citenamefont {Benjamin},\ and\ \citenamefont
  {Fitzsimons}}]{matsuzaki2011}%
  \BibitemOpen
  \bibfield  {author} {\bibinfo {author} {\bibfnamefont {Y.}~\bibnamefont
  {Matsuzaki}}, \bibinfo {author} {\bibfnamefont {S.~C.}\ \bibnamefont
  {Benjamin}}, \ and\ \bibinfo {author} {\bibfnamefont {J.}~\bibnamefont
  {Fitzsimons}},\ }\href {\doibase 10.1103/PhysRevA.84.012103} {\bibfield
  {journal} {\bibinfo  {journal} {Phys. Rev. A}\ }\textbf {\bibinfo {volume}
  {84}},\ \bibinfo {pages} {012103} (\bibinfo {year} {2011})}\BibitemShut
  {NoStop}%
\bibitem [{\citenamefont {Yoshinaga}\ \emph {et~al.}(2021)\citenamefont
  {Yoshinaga}, \citenamefont {Tatsuta},\ and\ \citenamefont
  {Matsuzaki}}]{Matsuzaki_2021}%
  \BibitemOpen
  \bibfield  {author} {\bibinfo {author} {\bibfnamefont {A.}~\bibnamefont
  {Yoshinaga}}, \bibinfo {author} {\bibfnamefont {M.}~\bibnamefont {Tatsuta}},
  \ and\ \bibinfo {author} {\bibfnamefont {Y.}~\bibnamefont {Matsuzaki}},\
  }\href {\doibase 10.1103/PhysRevA.103.062602} {\bibfield  {journal} {\bibinfo
   {journal} {Phys. Rev. A}\ }\textbf {\bibinfo {volume} {103}},\ \bibinfo
  {pages} {062602} (\bibinfo {year} {2021})}\BibitemShut {NoStop}%
\bibitem [{\citenamefont {Bengtsson}\ and\ \citenamefont
  {Zyczkowski}(2006)}]{Bengtsson_Zyczkowski_2006}%
  \BibitemOpen
  \bibfield  {author} {\bibinfo {author} {\bibfnamefont {I.}~\bibnamefont
  {Bengtsson}}\ and\ \bibinfo {author} {\bibfnamefont {K.}~\bibnamefont
  {Zyczkowski}},\ }\href@noop {} {\emph {\bibinfo {title} {Geometry of Quantum
  States: An Introduction to Quantum Entanglement}}}\ (\bibinfo  {publisher}
  {Cambridge University Press},\ \bibinfo {year} {2006})\BibitemShut {NoStop}%
\bibitem [{\citenamefont {Monika}\ \emph {et~al.}(2023)\citenamefont {Monika},
  \citenamefont {Lakkaraju}, \citenamefont {Ghosh},\ and\ \citenamefont
  {De}}]{monika2023}%
  \BibitemOpen
  \bibfield  {author} {\bibinfo {author} {\bibnamefont {Monika}}, \bibinfo
  {author} {\bibfnamefont {L.~G.~C.}\ \bibnamefont {Lakkaraju}}, \bibinfo
  {author} {\bibfnamefont {S.}~\bibnamefont {Ghosh}}, \ and\ \bibinfo {author}
  {\bibfnamefont {A.~S.}\ \bibnamefont {De}},\ }\href@noop {} {\enquote
  {\bibinfo {title} {Better sensing with variable-range interactions},}\ }
  (\bibinfo {year} {2023}),\ \Eprint {http://arxiv.org/abs/2307.06901}
  {arXiv:2307.06901 [quant-ph]} \BibitemShut {NoStop}%
\bibitem [{\citenamefont {Yousefjani}\ \emph {et~al.}(2023)\citenamefont
  {Yousefjani}, \citenamefont {He},\ and\ \citenamefont
  {Bayat}}]{abolfazl_longrange_2023}%
  \BibitemOpen
  \bibfield  {author} {\bibinfo {author} {\bibfnamefont {R.}~\bibnamefont
  {Yousefjani}}, \bibinfo {author} {\bibfnamefont {X.}~\bibnamefont {He}}, \
  and\ \bibinfo {author} {\bibfnamefont {A.}~\bibnamefont {Bayat}},\ }\href
  {\doibase 10.1088/1674-1056/acf302} {\bibfield  {journal} {\bibinfo
  {journal} {Chinese Physics B}\ }\textbf {\bibinfo {volume} {32}},\ \bibinfo
  {pages} {100313} (\bibinfo {year} {2023})}\BibitemShut {NoStop}%
\bibitem [{\citenamefont {Sadhukhan}\ \emph {et~al.}(2020)\citenamefont
  {Sadhukhan}, \citenamefont {Sinha}, \citenamefont {Francuz}, \citenamefont
  {Stefaniak}, \citenamefont {Rams}, \citenamefont {Dziarmaga},\ and\
  \citenamefont {Zurek}}]{debasis_sadhukhan_2020}%
  \BibitemOpen
  \bibfield  {author} {\bibinfo {author} {\bibfnamefont {D.}~\bibnamefont
  {Sadhukhan}}, \bibinfo {author} {\bibfnamefont {A.}~\bibnamefont {Sinha}},
  \bibinfo {author} {\bibfnamefont {A.}~\bibnamefont {Francuz}}, \bibinfo
  {author} {\bibfnamefont {J.}~\bibnamefont {Stefaniak}}, \bibinfo {author}
  {\bibfnamefont {M.~M.}\ \bibnamefont {Rams}}, \bibinfo {author}
  {\bibfnamefont {J.}~\bibnamefont {Dziarmaga}}, \ and\ \bibinfo {author}
  {\bibfnamefont {W.~H.}\ \bibnamefont {Zurek}},\ }\href {\doibase
  10.1103/PhysRevB.101.144429} {\bibfield  {journal} {\bibinfo  {journal}
  {Phys. Rev. B}\ }\textbf {\bibinfo {volume} {101}},\ \bibinfo {pages}
  {144429} (\bibinfo {year} {2020})}\BibitemShut {NoStop}%
\bibitem [{\citenamefont {Lakkaraju}\ \emph {et~al.}(2022)\citenamefont
  {Lakkaraju}, \citenamefont {Ghosh}, \citenamefont {Sadhukhan},\ and\
  \citenamefont {Sen(De)}}]{mimicking_lakkaraju_2022}%
  \BibitemOpen
  \bibfield  {author} {\bibinfo {author} {\bibfnamefont {L.~G.~C.}\
  \bibnamefont {Lakkaraju}}, \bibinfo {author} {\bibfnamefont {S.}~\bibnamefont
  {Ghosh}}, \bibinfo {author} {\bibfnamefont {D.}~\bibnamefont {Sadhukhan}}, \
  and\ \bibinfo {author} {\bibfnamefont {A.}~\bibnamefont {Sen(De)}},\ }\href
  {\doibase 10.1103/PhysRevA.106.052425} {\bibfield  {journal} {\bibinfo
  {journal} {Phys. Rev. A}\ }\textbf {\bibinfo {volume} {106}},\ \bibinfo
  {pages} {052425} (\bibinfo {year} {2022})}\BibitemShut {NoStop}%
\bibitem [{\citenamefont {Lakkaraju}\ \emph {et~al.}(2023)\citenamefont
  {Lakkaraju}, \citenamefont {Ghosh}, \citenamefont {Sadhukhan},\ and\
  \citenamefont {De}}]{dqpt_lakkaraju_2023}%
  \BibitemOpen
  \bibfield  {author} {\bibinfo {author} {\bibfnamefont {L.~G.~C.}\
  \bibnamefont {Lakkaraju}}, \bibinfo {author} {\bibfnamefont {S.}~\bibnamefont
  {Ghosh}}, \bibinfo {author} {\bibfnamefont {D.}~\bibnamefont {Sadhukhan}}, \
  and\ \bibinfo {author} {\bibfnamefont {A.~S.}\ \bibnamefont {De}},\
  }\href@noop {} {\enquote {\bibinfo {title} {Framework of dynamical
  transitions from long-range to short-range quantum systems},}\ } (\bibinfo
  {year} {2023}),\ \Eprint {http://arxiv.org/abs/2305.02945} {arXiv:2305.02945
  [quant-ph]} \BibitemShut {NoStop}%
\bibitem [{\citenamefont {Vodola}\ \emph {et~al.}(2015)\citenamefont {Vodola},
  \citenamefont {Lepori}, \citenamefont {Ercolessi},\ and\ \citenamefont
  {Pupillo}}]{vodola2015longrangephases}%
  \BibitemOpen
  \bibfield  {author} {\bibinfo {author} {\bibfnamefont {D.}~\bibnamefont
  {Vodola}}, \bibinfo {author} {\bibfnamefont {L.}~\bibnamefont {Lepori}},
  \bibinfo {author} {\bibfnamefont {E.}~\bibnamefont {Ercolessi}}, \ and\
  \bibinfo {author} {\bibfnamefont {G.}~\bibnamefont {Pupillo}},\ }\href
  {https://iopscience.iop.org/article/10.1088/1367-2630/18/1/015001} {\bibfield
   {journal} {\bibinfo  {journal} {New Journal of Physics}\ }\textbf {\bibinfo
  {volume} {18}},\ \bibinfo {pages} {015001} (\bibinfo {year}
  {2015})}\BibitemShut {NoStop}%
\bibitem [{\citenamefont {Vodola}\ \emph {et~al.}(2014)\citenamefont {Vodola},
  \citenamefont {Lepori}, \citenamefont {Ercolessi}, \citenamefont {Gorshkov},\
  and\ \citenamefont {Pupillo}}]{Vodola1}%
  \BibitemOpen
  \bibfield  {author} {\bibinfo {author} {\bibfnamefont {D.}~\bibnamefont
  {Vodola}}, \bibinfo {author} {\bibfnamefont {L.}~\bibnamefont {Lepori}},
  \bibinfo {author} {\bibfnamefont {E.}~\bibnamefont {Ercolessi}}, \bibinfo
  {author} {\bibfnamefont {A.~V.}\ \bibnamefont {Gorshkov}}, \ and\ \bibinfo
  {author} {\bibfnamefont {G.}~\bibnamefont {Pupillo}},\ }\href {\doibase
  10.1103/PhysRevLett.113.156402} {\bibfield  {journal} {\bibinfo  {journal}
  {Phys. Rev. Lett.}\ }\textbf {\bibinfo {volume} {113}},\ \bibinfo {pages}
  {156402} (\bibinfo {year} {2014})}\BibitemShut {NoStop}%
\bibitem [{\citenamefont {Vodola}(2015)}]{VodolaThesis}%
  \BibitemOpen
  \bibfield  {author} {\bibinfo {author} {\bibfnamefont {D.}~\bibnamefont
  {Vodola}},\ }\emph {\bibinfo {title} {Correlations and Quantum Dynamics of 1D
  Fermionic Models: New Results for the Kitaev Chain with Long-Range
  Pairing}},\ \href
  {http://amsdottorato.unibo.it/6745/1/vodola_davide_tesi.pdf} {Ph.D. thesis},\
  \bibinfo  {school} {Università di Bologna} (\bibinfo {year} {2015}),\
  \bibinfo {note} {an optional note}\BibitemShut {NoStop}%
\end{thebibliography}%

\appendix




\end{document}